\theoremstyle{definition}
\newtheorem{defn}{\protect\definitionname}
\theoremstyle{plain}
\newtheorem{lem}{\protect\lemmaname}
\theoremstyle{plain}
\newtheorem{thm}{\protect\theoremname}
\theoremstyle{remark}
\newtheorem*{acknowledgement*}{\protect\acknowledgementname}
\DeclareMathOperator{\Tr}{Tr}
\DeclareMathOperator{\I}{I}
\newcommand{\C}{\mathbb{C}}
\newcommand{\R}{\mathbb{R}}
\newcommand{\cS}{\mathcal{S}}
\newcommand{\E}{\boldsymbol{e}}
\newcommand{\F}{\boldsymbol{f}}
\newcommand{\bu}{\boldsymbol{u}}
\newcommand{\ba}{\boldsymbol{a}}
\newcommand{\bx}{\boldsymbol{x}}
\newcommand{\bh}{\boldsymbol{h}}
\newcommand{\bO}{\boldsymbol{0}}
\newcommand{\bo}{\boldsymbol{\omega}}
\renewcommand{\sigma}{\upsigma}
\newcommand{\norm}[1]{\left\lVert#1\right\rVert}
\newcommand{\conv}{\mbox{Conv}}
\providecommand{\acknowledgementname}{Acknowledgement}
\providecommand{\definitionname}{Definition}
\providecommand{\lemmaname}{Lemma}
\providecommand{\theoremname}{Theorem}
\begin{document}

\title{General Probabilistic Theories with a Gleason-type Theorem}

\author{Victoria J Wright}
\affiliation{International Centre for Theory of Quantum Technologies, University of Gda\'{n}sk, 80-308 Gda\'{n}sk, Poland}
\affiliation{ICFO-Institut de Ciencies Fotoniques, The Barcelona Institute of Science and Technology, 08860 Castelldefels,
Spain}
\email{victoria.wright@icfo.eu}
\orcid{0000-0003-3523-7553}
\author{Stefan Weigert}
\email{stefan.weigert@york.ac.uk}
\orcid{0000-0002-6647-3252}
\affiliation{Department of Mathematics, University of York, York YO10 5DD, United Kingdom}

\maketitle

\begin{abstract}
Gleason-type theorems for quantum theory allow one to recover the
quantum state space by assuming that (i) states consistently assign
probabilities to measurement outcomes and that (ii) there is a unique
state for every such assignment. We identify the class of general
probabilistic theories which also admit Gleason-type theorems. It
contains theories satisfying the no-restriction hypothesis as well
as others which can simulate such an unrestricted theory arbitrarily
well when allowing for post-selection on measurement outcomes. Our
result also implies that the standard no-restriction hypothesis applied
to effects is not equivalent to the dual no-restriction hypothesis
applied to states which is found to be less restrictive.
\end{abstract}
\global\long\def\kb#1#2{|#1\rangle\langle#2|}%
\global\long\def\bk#1#2{\langle#1|#2\rangle}%

\global\long\def\ket#1{|#1\rangle}%
\global\long\def\bra#1{\langle#1|}%
\global\long\def\cd{\mathbb{C}^{d}}%

\section{Introduction}

More than sixty years ago, Mackey \cite{mackey1957} asked whether
the density operator represents the most general notion of a quantum
state that is consistent with the standard description of observables
as self-adjoint operators. Gleason \cite{Gleason1957} responded with
a proof that---in separable Hilbert spaces of dimension greater than
two---every state must admit an expression in terms of a density
operator if it is to consistently assign probabilities to the measurement
outcomes of such observables. In 2003, Busch \cite{Busch2003} (and
then Caves et al. \cite{Caves2004}) generalized the idea of Gleason's
theorem to observables represented by positive-operator measures (POMs).
The resulting Gleason\emph{-type }theorem (GTT) not only is much simpler
to prove but it also applies to two-dimensional Hilbert spaces, since
the assumptions being made are stronger than in Gleason's case.

In this paper, we investigate whether the Gleason-type theorem is
special to quantum theory. Imagine that a theory different from quantum
theory were to successfully describe Nature. Would a GTT still exist?

Our question is made explicit by posing it within the family of \emph{general
probabilistic theories} (GPTs) which have emerged as natural generalizations
of quantum theory \cite{janotta2014generalized,BarrettGPT,Masanes2011,barnum2011information,Hardy2001a}.
The framework of GPTs derives from operational principles and it encompasses
both quantum and classical models. One of the motivations to explore
these alternative theories has been to identify features which single
out quantum theory among others of comparable structure. Our study
contributes to that fundamental quest.

Effectively, Gleason and Busch establish a bijection between frame
functions and density operators in quantum theory. \emph{Frame functions
}associate probabilities to the mathematical objects representing
the possible outcomes of measurements in such a way that the probabilities
assigned to all disjoint outcomes of a given measurement sum to unity.
The rationale behind a frame function is that the probabilities of
all measurement outcomes for all observables should define a unique
state. If this were not the case, then two ``different'' states
would be indistinguishable, both practically and theoretically.

Our strategy will be to generalise the concept of frame functions
to GPTs in order to investigate whether they are in exact correspondence
with the objects that represent states in these theories. We are able
to identify all general probabilistic theories in which this correspondence
continues to hold. We find that GTTs exist for GPTs satisfying the
no-restriction hypothesis \cite{ChiribellaPhysRevA.81.062348,janottanorestriction},
as anyone familiar with the proof of Busch's result or the work of
Gudder et al. \cite{gudder1999convex} might expect. However, we also
find other GPTs which admit a Gleason-type theorem, namely those that
satisfy a ``noisy version'' of the no-restriction hypothesis (or come
arbitrarily close to satisfying it). An alternative way to characterise
this class of GPTs is to use the idea of simultation of measurements
via classical processes and postselection \cite{postselection}. Any
GPT in this class can simulate an arbitrarily good approximation to
any observable in a related unrestricted GPT (i.e. the GPT satisfying
the no-restriction hypothesis).

The existence of a GTT for GPTs such as quantum theory or \emph{real-vector-space
quantum theory }\cite{caves2001entanglement,Hardy2012,wootters2012entanglement,Aleksandrova2013}
has a number of consequences. It becomes possible, for example, to
modify the axiomatic structure of the theories as it is no longer
necessary to---separately and independently---sti\-pu\-late both
the state space \emph{and} the observables of the theory. Our result
can also be used to derive the standard GPT framework from operational
assumptions different to those found in the literature. More specifically,
the standard GPT framework is recovered if---after motivating the
standard description of observables in GPTs---states are assumed
to correspond to frame functions of these observables.

Additionally, our result also uncovers a new property of the no-restriction
hypothesis in GPTs. The standard no-restriction hypothesis applied
to effects turns out to be inequivalent to the dual assumption---the
\emph{no-state-restriction hypothesis---}applied to states. Consequently,
deriving the GPT framework in two equivalent ways, starting from the
structure of state spaces in one case or effect spaces in the other,
leads to inequivalent frameworks if the respective no-restriction
hypotheses are assumed.

To make the paper self-contained and to introduce the notation, we
will first review concepts of the GPT framework relevant here. In
Section \ref{sec:Main-Result}, we define frame functions for GPTs
and prove our main theorem.\emph{ }In Section \ref{sec:Examples},
we provide three examples to demonstrate the simplification of the
postulates required to specify an individual GPT. Section \ref{sec:Simulability}
strengthens our main theorem by defining frame functions only on a
proper subset of all observables, the analog of\emph{ projective-simulable
}observables. The stronger result leads to an alternative operational
motivation for deriving part of the GPT framework. In Section \ref{sec:Summary-and-Discussion}
we summarize and discuss our results.

\section{General probabilistic theories\label{sec:Generalized-probabilistic-theori}}

The GPT framework allows one to define a broad family of theories
of which quantum theory (in finite dimensional Hilbert spaces) is
a member. Any (real or fictitious) system described by a GPT has the
following fundamental property: there exists a finite set of \emph{fiducial
}measurement outcomes, the probabilities of which uniquely determine
its state.\footnote{To encompass quantum theory \emph{in toto}, the restriction to a \emph{finite}
set of fiducial measurement outcomes would need to be relaxed; see
Nuida et al. \cite{Nuida2010} and Lami et al. \cite{lami2021framework},
for example.} For example, the state of a spin-$1/2$ particle is determined by
the probabilities of the $+1$ outcome of measuring spin observables
in three orthogonal directions, as demonstrated by the Bloch vector
description.

There are many different yet equivalent ways to formulate the GPT
framework. To make this paper self-contained, let us briefly outline
an intuitive approach to GPTs which is based on an operational derivation
\cite{Masanes2011}.

\subsection{States\label{subsec:States}}

If a system has a \emph{minimal} fiducial set consisting of $d$ outcomes\footnote{A fiducial set is minimal if there is no such set with fewer than
$d$ outcomes.}, its \emph{state space} $\mathcal{S}$ is given by a convex, compact
set of vectors of the form 
\begin{equation}
\boldsymbol{\omega}=\begin{pmatrix}p_{1}\\
\vdots\\
p_{d}\\
1
\end{pmatrix}\in\R^{d+1},\label{eq:probstates}
\end{equation}
where $p_{k}\in\left[0,1\right],k=1\ldots d$, are the probabilities
of the fiducial outcomes. The extra dimension of the ``ambient''
vector space simplifies the description of measurement outcomes, as
explained below. The convexity of the state space follows from the
assumption that if one were to prepare the system in the states $\boldsymbol{\omega}$
and $\boldsymbol{\omega}'=\left(p_{1}^{\prime},\ldots,p_{d}^{\prime},1\right)^{T}$
with probabilities $\lambda$ and $(1-\lambda)$, respectively, then
the probability of observing the $k$-th fiducial measurement outcome
should equal 
\begin{equation}
p_{k}^{\prime\prime}(\lambda)=\lambda p_{k}+\left(1-\lambda\right)p_{k}^{\prime}\,,\qquad\lambda\in[0,1],\quad k=1\ldots d\,;\label{eq: mixing of probs}
\end{equation}
therefore, this mixed state should be represented by the vector 
\begin{equation}
\boldsymbol{\boldsymbol{\omega}}''(\lambda)=\lambda\boldsymbol{\omega}+\left(1-\lambda\right)\boldsymbol{\omega}'\,.\label{eq: mixed state}
\end{equation}

A state $\boldsymbol{\omega}$ is \emph{extremal }if it cannot be
written as a (non-trivial) convex combination of other states. The
state space is assumed to be compact since, firstly, it must be bounded
if the entries of the vector are to be between zero and one. Secondly,
as an arbitrarily good approximation of a state would be operationally
indistinguishable from the state itself, we also assume the state
space is closed in the Euclidean topology. Throughout the manuscript
we will use the Euclidean topology and Euclidean norm $\left\Vert \cdot\right\Vert $.

As an example, consider a \emph{classical bit} which may reside in
one of two states called ``0'' and ``1'', or in a mixture of the
two. If we know that the bit is in state 0 with probability $p$ then
it is in state 1 with probability $(1-p)$; in other words, the number
$p\in[0,1]$ determines the state of the system. When performing the
measurement which asks ``Is the bit in state 0 or 1?'', the outcome
``The bit is in state 0.'' forms a complete set of fiducial measurement
outcomes. Thus, the state space $\mathcal{S}_{b}$ of the bit can
be represented by the line segment between $(0,1)^{T}$ and $(1,1)^{T}$,
as displayed in Fig. \ref{fig:skewbit} (see Section. \ref{subsec:Effects-and-observables}).
The end points of the segment correspond to the states 0 and 1, respectively,
and their convex hull defines the state space ${\cal S}_{b}$.

\subsection{Effects and observables\label{subsec:Effects-and-observables}}

The possible outcomes of measuring an observable in a GPT system with
state space $\cS$ correspond to \emph{effects} which are linear maps
$e:\R^{d+1}\rightarrow\R$ such that $0\leq e\left(\boldsymbol{\omega}\right)\leq1$
for all states $\boldsymbol{\omega}\in\mathcal{S}$; here $e\left(\boldsymbol{\omega}\right)$
denotes the probability of observing the outcome $e$ when a measurement
$\mathbb{M}$ (with $e$ as a possible outcome) is performed on a
system in state $\boldsymbol{\omega}$. Due to the linearity of the
map $e$, any effect can be uniquely expressed in the form 
\begin{equation}
e\left(\boldsymbol{\omega}\right)=\boldsymbol{e}\cdot\boldsymbol{\omega},\label{eq: linear map as scpr}
\end{equation}
for some vector $\boldsymbol{e}\in\R^{d+1}$. We will also use the
term ``effect'' to refer to the vector $\boldsymbol{e}$ representing
a map $e$.

The linearity of effects is motivated by the assumption that they
should respect the mixing of states with some parameter $\lambda\in[0,1]$.
More specifically, the following two events should occur with the
same probability: 
\begin{enumerate}
\item observing the outcome $e$ of a measurement $\mathbb{M}$ performed
on a system in a mixed state $\boldsymbol{\omega}''(\lambda)=\lambda\boldsymbol{\omega}+\left(1-\lambda\right)\boldsymbol{\omega}'$; 
\item observing the outcome $e$ when the measurement $\mathbb{M}$ is performed
with probability $\lambda$ on a system in state $\boldsymbol{\omega}$
and with probability $(1-\lambda)$ on a system prepared in state
$\boldsymbol{\omega}'$. 
\end{enumerate}
This assumption implies that the map $e$ should satisfy 
\begin{equation}
e\left(\lambda\boldsymbol{\omega}+\left(1-\lambda\right)\boldsymbol{\omega}'\right)=\lambda e\left(\boldsymbol{\omega}\right)+\left(1-\lambda\right)e\left(\boldsymbol{\omega}'\right)\,,\qquad\boldsymbol{\omega},\boldsymbol{\omega}^{\prime}\in\mathcal{S}\,.\label{eq: affine condition}
\end{equation}
Thus, the map $\E$ is an \emph{affine} function on the state space
$\mathcal{S}$ which can be extended to a \emph{linear} function on
the vector space $\R^{d+1}$ containing $\mathcal{S}$.

The set of all effects associated with measurement outcomes in a specific
GPT system is known as its \emph{effect space,} $\mathcal{E}$. The
space $\mathcal{E}$ corresponds to a convex subset of $\R^{d+1}$,
as does the state space $\mathcal{S}$. It necessarily contains the
zero and unit vectors, 
\begin{equation}
\boldsymbol{0}=\begin{pmatrix}0\\
\vdots\\
0\\
0
\end{pmatrix}\qquad\text{ and }\qquad\boldsymbol{u}=\begin{pmatrix}0\\
\vdots\\
0\\
1
\end{pmatrix},\label{eq: zero and unit}
\end{equation}
as well as the vector $(\boldsymbol{u}-\boldsymbol{e})$ for every
$\boldsymbol{e}\in\mathcal{E}$ \cite{janottanorestriction}, which
arises automatically as a valid effect. We also assume that the effect
space spans the full $(d+1)$ dimensions of the vector space; otherwise
the model would contain states which result in identical probabilities
for all effects in the effect space, making them indistinguishable
and hence operationally equivalent. Note that a $d$-dimensional state
space comes with a $\left(d+1\right)$-dimensional effect space. \emph{Extremal
}effects are defined by the property that they cannot be written as
a (non-trivial) convex combination of other effects.

\emph{Observables} are given by tuples $\left\llbracket \E_{1},\E_{2},\ldots\right\rrbracket $
of elements of the effect space that sum to the unit effect $\boldsymbol{u}$,
with each effect in the tuple corresponding to a different possible
outcome when measuring the observable. The position of an effect in
the tuple encodes the label of the corresponding outcome. Given the
observable $\mathbb{D}_{\E}=\left\llbracket \E,\boldsymbol{u}-\E\right\rrbracket $,
for example, we will say effect $\E$ represents the first possible
outcome of measuring $\mathbb{D}_{\E}$ since $\E$ occupies the first
position in the tuple. A GPT should also specify which tuples of effects
correspond to observables (or, in the language of \cite{Filippov2019},
the GPT should specify the set of \emph{meters}). We will assume throughout
(except in Section. \ref{sec:Simulability}) that any finite tuple
of effects $\left\llbracket \E_{1},\ldots,\E_{n}\right\rrbracket $
satisfying 
\begin{equation}
\sum_{j=1}^{n}\E_{j}=\bu\label{eq:unit}
\end{equation}
and
\begin{equation}
\sum_{j\in J}\E_{j}\in\mathcal{E}\label{eq:coarse}
\end{equation}
for any subset $J\subset\left\{ 1,\ldots,n\right\} $, corresponds
to an observable of the GPT system\footnote{This assumption is equivalent to considering GPTs with a restriction
of type (R1) in \cite{Filippov2019}, whereas the GPTs considered
in Section. \ref{sec:Simulability} can be of type (R2) or (R3).}. Eq. (\ref{eq:coarse}) ensures that the set of observables is closed
under coarse-graining of outcomes.

The effect space $\mathcal{E}_{b}$ of the classical bit with state
space $\mathcal{S}_{b}$ is given by the parallelogram depicted in
Fig. \ref{fig:skewbit}. The two-outcome measurement $\mathbb{B}$
answering ``Is the bit in state 0 or 1?'' is represented by 
\begin{equation}
\mathbb{B}=\left\llbracket \begin{pmatrix}-1\\
1
\end{pmatrix},\begin{pmatrix}1\\
0
\end{pmatrix}\right\rrbracket .\label{eq: two-outcome meas}
\end{equation}

\begin{figure}
\begin{centering}
\subfloat[\label{fig:skewbit}]{\begin{centering}
\includegraphics[scale=2]{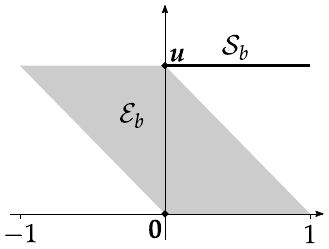}
\par\end{centering}
}\hfill{}\subfloat[\label{fig:bit}]{\begin{centering}
\includegraphics[scale=2]{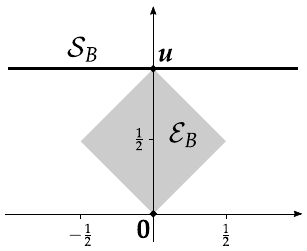}
\par\end{centering}
}
\par\end{centering}
\caption{\label{fig:bits}State and effect spaces of the classical-bit GPT
system: (a) formulated as described in Sections \ref{subsec:States}
and \ref{subsec:Effects-and-observables} and using Eq. (\ref{eq: two-outcome meas});
(b) after applying the transformation given in Eq. (\ref{eq:bittrans}).}
\end{figure}

\subsection{Equivalent GPTs\label{subsec:Transforming-GPTs}}

When considering a specific GPT it is sometimes useful to linearly
transform its state and effect spaces. The Bloch vector describing
a qubit state is a case in point since its components do not necessarily
take values in the range $\left[0,1\right]$. The Bloch vector representation
of a qubit density operator 
\begin{equation}
\rho=\frac{1}{2}\left(\I+x\sigma_{x}+y\sigma_{y}+z\sigma_{z}\right),\label{eq: qubit density matrix}
\end{equation}
is given by the vector $\left(x,y,z,1\right)^{T}$, with the fourth
component being the coefficient of the identity matrix. The linear
relation between each of the coefficients $r\in\left\{ x,y,z\right\} $
and the probability $p_{r}$ of finding the outcome ``$+1$'' when
measuring $\sigma_{r}$ reads explicitly $p_{r}=\left(1+r\right)/2$.

Any linear transformation which preserves the inner product between
states and effects of a given GPT system gives rise to an alternative
representation. Suppose that we transform the state space $\mathcal{S}$
by an invertible $\left(d+1\right)\times\left(d+1\right)$ matrix
$\mathbf{M}$ to the space $\mathcal{S}_{\mathbf{M}}\equiv\mathbf{M}\mathcal{S}$.
Then we must apply the inverse transpose transformation $\mathbf{M}^{-T}\equiv\left(\mathbf{M}^{-1}\right)^{T}$
to the effect space, $\mathcal{E}_{\mathbf{M}}\equiv\mathbf{M}^{-T}\mathcal{E}$,
in order that the probabilities remain invariant, 
\begin{equation}
\E_{\mathbf{M}}\cdot\boldsymbol{\omega}_{\mathbf{M}}=\left(\mathbf{M}^{-T}\E\right)\cdot\left(\mathbf{M}\boldsymbol{\omega}\right)=\E\cdot\boldsymbol{\omega}\,.\label{eq: invariance of probs}
\end{equation}
The transformed state and effect spaces continue to be convex subsets
of $\R^{d+1}$, and they can even be thought of as a convex subset
of a vector space isomorphic to $\R^{d+1}$. GPTs are often presented
in this way (cf. \cite{Barnum2014,janottanorestriction} and references
therein).

The standard formulation of quantum theory in finite dimensions is
an example of representing the state and effect spaces of a theory
as subsets of a vector space isomorphic to $\R^{d+1}$. Quantum states
are represented by density operators on $\C^{d}$ which form a convex
subset of the real vector space of Hermitian operators on $\C^{d}$,
which is isomorphic to $\R^{d^{2}}$. Quantum effects, or elements
of a positive-operator measure (POM), can also be embedded in this
space with $e\left(\boldsymbol{\omega}\right)=\Tr\left(\boldsymbol{e}\boldsymbol{\omega}\right)$
for an operator $\boldsymbol{e}$ satisfying $0\leq\bk{\psi}{\boldsymbol{e}\ket{\psi}}\leq\bk{\psi}{\psi}$
for all rays $\ket{\psi}\in\C^{d}$. Using this representation of
the state and effect spaces is convenient for $d>2$ since it is cumbersome
to explicitly describe the set of density operators by some set of
constraints on vectors of the form given in Eq. (\ref{eq:probstates})
(see e.g. \cite{kimura2003bloch,Goyal_2016,bengtsson2017geometry}). 

As an explicit example, let us transform the GPT description of a
classical bit with state space $\mathcal{S}_{b}$ by the matrix 
\begin{equation}
\mathbf{M}=\begin{pmatrix}2 & -1\\
0 & 1
\end{pmatrix}\,.\label{eq:bittrans}
\end{equation}
The new state space, $\mathcal{S}_{B}\equiv\mathbf{M}\mathcal{S}_{b}$,
is now the convex hull of the images of the extremal states $0$ and
$1$ (previously located at $(0,1)^{T}$ and $(1,1)^{T}$, respectively),
i.e. 
\begin{equation}
\mathcal{S}_{B}=\conv\left\{ \begin{pmatrix}-1\\
1
\end{pmatrix},\begin{pmatrix}1\\
1
\end{pmatrix}\right\} \,.\label{eq: S_B as hull}
\end{equation}
Similarly, the effect space, $\mathcal{E}_{B}\equiv\mathbf{M}^{-T}{\cal E}_{b}$,
is given by the convex hull of the zero effect $\mathbf{0}$, the
unit effect $\bu$ and two other extremal effects, 
\begin{equation}
{\cal E}_{B}=\conv\left\{ \begin{pmatrix}0\\
0
\end{pmatrix},\begin{pmatrix}1\\
0
\end{pmatrix},\frac{1}{2}\begin{pmatrix}-1\\
1
\end{pmatrix},\frac{1}{2}\begin{pmatrix}1\\
1
\end{pmatrix}\right\} \,,\label{eq: E_B as hull}
\end{equation}
as pictured in Figure \ref{fig:bit}.

\subsection{Cones in GPTs \label{subsec:Cones-in-GPTs}}

The notion of a \emph{positive} \emph{cone }is useful when studying
the properties of state and effect spaces of a GPT. A positive cone
is a subset of $\R^{d+1}$ that contains all non-negative linear combinations
of its elements (see \cite{Rockafellar1970}, in which our positive
cone corresponds to a cone containing the origin). Positive cones
may, for example, be generated from convex subsets of real vector
spaces. 
\begin{defn}
\label{def:The-convex-cone}The \textit{positive cone }$A^{+}$ of
a convex subset $A$ of a real vector space is the set of vectors
\begin{equation}
A^{+}=\left\{ x\boldsymbol{a}|x\geq0,\boldsymbol{a}\in A\right\} \,.\label{eq: define positive cone}
\end{equation}
\end{defn}
Positive cones also arise from considering the space dual to a subset
of vectors in an inner product space. 
\begin{defn}
The \emph{dual cone} $A^{*}$ of a subset $A$ of a real inner product
space $V$ is the positive cone 
\begin{equation}
A^{*}=\left\{ \boldsymbol{b}\in V|\left\langle \boldsymbol{a},\boldsymbol{b}\right\rangle \geq0\text{ for all }\boldsymbol{a}\in A\right\} \,.\label{eq: define dual cone}
\end{equation}
\end{defn}
Figure (\ref{fig:bitcone}) illustrates, for a classical bit, the
dual cone $\mathcal{S}_{B}^{*}$ of the state space $\mathcal{S}_{B}$.
It is easy to see that, in general, the effect space ${\cal E}$ of
a GPT system must be contained within the dual cone $\mathcal{S}^{*}$
of the state space in order that the effects assign non-negative probabilities
to every state in the state space.

The following lemma describes a simple but important property of effect
spaces related to the fact that the elements of its dual cone effectively
span the ambient space. 
\begin{lem}
For any effect space $\mathcal{E}$ and any vector $\boldsymbol{c}\in\R^{d+1}$,
we have $\boldsymbol{c}=\boldsymbol{a}-\boldsymbol{b}$ for some vectors
$\boldsymbol{a},\boldsymbol{b}\in\mathcal{E}^{+}$.\label{Decomposition-1} 
\end{lem}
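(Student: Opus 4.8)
The plan is to exploit the hypothesis that the effect space $\mathcal{E}$ spans all of $\R^{d+1}$, so that its positive cone $\mathcal{E}^{+}$ — while a proper cone — still has nonempty interior and, crucially, contains a spanning set of vectors. First I would pick a basis $\{\F_{1},\ldots,\F_{d+1}\}$ of $\R^{d+1}$ consisting of elements of $\mathcal{E}$; this is possible precisely because $\mathcal{E}$ spans the ambient space. Then any $\boldsymbol{c}\in\R^{d+1}$ can be written as $\boldsymbol{c}=\sum_{j}\gamma_{j}\F_{j}$ with real coefficients $\gamma_{j}$, and splitting each coefficient into its positive and negative parts, $\gamma_{j}=\gamma_{j}^{+}-\gamma_{j}^{-}$ with $\gamma_{j}^{\pm}\geq0$, gives $\boldsymbol{c}=\boldsymbol{a}-\boldsymbol{b}$ where $\boldsymbol{a}=\sum_{j}\gamma_{j}^{+}\F_{j}$ and $\boldsymbol{b}=\sum_{j}\gamma_{j}^{-}\F_{j}$. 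Since each $\F_{j}\in\mathcal{E}\subset\mathcal{E}^{+}$ and $\mathcal{E}^{+}$ is closed under non-negative linear combinations (being a cone by Definition \ref{def:The-convex-cone}), both $\boldsymbol{a}$ and $\boldsymbol{b}$ lie in $\mathcal{E}^{+}$, which is what we want.

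The one point that needs a little care — and is really the only obstacle — is justifying that a spanning basis can be drawn from $\mathcal{E}$ itself rather than merely from its linear span; but this is immediate, since a spanning set of a finite-dimensional vector space always contains a basis. Alternatively, and perhaps more in keeping with the cone language of the preceding subsection, I would argue that $\mathcal{E}^{+}-\mathcal{E}^{+}=\Span(\mathcal{E})=\R^{d+1}$: the difference set of a cone is always a linear subspace, and here it coincides with the span of $\mathcal{E}$ because every element of $\mathcal{E}$ is trivially a difference of two elements of $\mathcal{E}^{+}$ (namely itself and $\boldsymbol{0}$, using $\boldsymbol{0}\in\mathcal{E}\subset\mathcal{E}^{+}$). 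Either route yields the claim in one short paragraph.

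I would therefore present the proof as the basis-decomposition argument, as it is the most transparent: it makes explicit that the "difference" structure is nothing more than the ability to write coordinates relative to a basis of effects, with the sign ambiguity absorbed by the freedom in choosing $\boldsymbol{a}$ and $\boldsymbol{b}$. No compactness, convexity of $\mathcal{E}$ beyond its containing $\boldsymbol{0}$, or topological input is required — only the spanning assumption stated when the effect space was introduced.
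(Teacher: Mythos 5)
Your proof is correct, but it takes a different route from the paper. The paper argues topologically: since $\mathcal{E}$ is convex and spans $\R^{d+1}$, the cone $\mathcal{E}^{+}$ has non-empty interior; picking an interior point $\boldsymbol{e}$, one has $\boldsymbol{e}+\epsilon\boldsymbol{c}\in\mathcal{E}^{+}$ for some $\epsilon>0$, and then $\boldsymbol{a}=(\boldsymbol{e}+\epsilon\boldsymbol{c})/\epsilon$, $\boldsymbol{b}=\boldsymbol{e}/\epsilon$ do the job. You instead extract a basis $\{\boldsymbol{f}_{1},\ldots,\boldsymbol{f}_{d+1}\}\subset\mathcal{E}$ from the spanning assumption, expand $\boldsymbol{c}$ in it, and split the coefficients into positive and negative parts; this is purely linear-algebraic, avoids any mention of interiors, and makes the "difference of cone elements" structure very transparent, whereas the paper's argument is shorter and reuses the interior-point observation it needs elsewhere anyway. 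One small correction to your closing remark: convexity of $\mathcal{E}$ \emph{is} used in your argument, not just $\boldsymbol{0}\in\mathcal{E}$. With the paper's Definition \ref{def:The-convex-cone}, $\mathcal{E}^{+}$ is merely the union of rays through $\mathcal{E}$, and the statement that a non-negative combination $\sum_{j}\gamma_{j}^{+}\boldsymbol{f}_{j}$ lies in $\mathcal{E}^{+}$ requires normalising, $\sum_{j}\gamma_{j}^{+}\boldsymbol{f}_{j}=s\sum_{j}(\gamma_{j}^{+}/s)\boldsymbol{f}_{j}$ with $s=\sum_{j}\gamma_{j}^{+}$, and recognising the rescaled sum as a convex combination of effects, hence an element of $\mathcal{E}$ by convexity (the same issue arises in your alternative argument that $\mathcal{E}^{+}-\mathcal{E}^{+}$ is a subspace, which also needs the cone to be convex). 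Since effect spaces are convex by assumption, this is a presentational inaccuracy rather than a gap, but the claim that no convexity is required should be dropped.
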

\begin{proof}
Firstly, the interior of $\mathcal{E}^{+}$ is non-empty since $\mathcal{E}$
is convex and spans $\R^{d+1}$. Let $\boldsymbol{e}$ be an interior
point of $\mathcal{E}^{+}$. As $\boldsymbol{e}$ is an interior point
of $\mathcal{E}^{+}$, we have that $\boldsymbol{e}+\epsilon\boldsymbol{c}\in\mathcal{E}^{+}$
for some $\epsilon>0$ and we may take $\boldsymbol{a}=\left(\boldsymbol{e}+\epsilon\boldsymbol{c}\right)/\epsilon$
and $\boldsymbol{b}=\boldsymbol{e}/\epsilon$. 
\end{proof}
\begin{figure}
\subfloat[\label{fig:bitcone}]{\centering{}\includegraphics[scale=1.5]{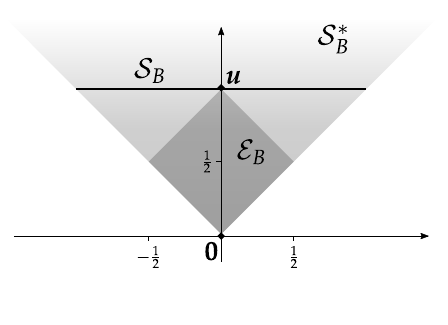}

}\hfill{}\subfloat[\label{fig:bitcones}]{\centering{}\includegraphics[scale=1.5]{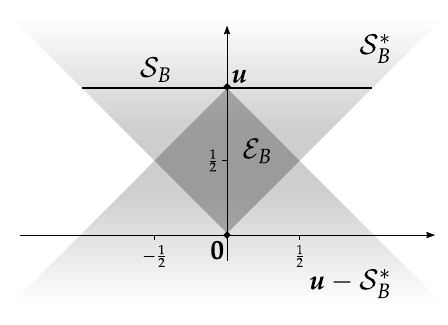}

}\caption{State and effect spaces $\mathcal{S}_{B}$ and $\mathcal{E}_{B}$
(the horizontal line and the dark square, respectively) of the classical-bit
GPT system showing (a) the dual cone $\mathcal{S}_{B}^{*}$ (shaded)
of the state space; (b) the effect space $\mathcal{E}_{B}$ is given
by the intersection of the cones $\mathcal{S}_{B}^{*}$ and $\boldsymbol{u}-\mathcal{S}_{B}^{*}$
since the bit is \emph{unrestricted} (see Section \ref{subsec:no-restriction-hypothesis}).}
\end{figure}

Two further lemmata (proven in Appendix \ref{conelemmata}), which
we will need later on, establish relations between positive cones
and their dual cones. We will denote the closure of a set $A$ by
$\overline{A}$.

\begin{restatable}{lem}{dualcone}

\label{Lemma: dual-cone}Let $A$ be a compact, convex subset of $\R^{d+1}$,
then $A^{**}=\overline{A^{+}}$.

\end{restatable}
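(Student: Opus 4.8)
The strategy is to view this as the bipolar theorem specialised to cones. The starting point is the identity $A^{*}=\left(A^{+}\right)^{*}$: the inclusion $A\subseteq A^{+}$ gives $\left(A^{+}\right)^{*}\subseteq A^{*}$, while $\langle\boldsymbol{a},\boldsymbol{b}\rangle\geq0$ for all $\boldsymbol{a}\in A$ forces $\langle x\boldsymbol{a},\boldsymbol{b}\rangle=x\langle\boldsymbol{a},\boldsymbol{b}\rangle\geq0$ for every $x\geq0$, which yields the reverse inclusion. I would also record at the outset that $A^{+}$ --- and hence $\overline{A^{+}}$ --- is a convex cone containing the origin: it is scale-invariant, it contains $\boldsymbol{0}$ by construction (take $x=0$, using that $A\neq\emptyset$), and a combination $\lambda(x_{1}\boldsymbol{a}_{1})+(1-\lambda)(x_{2}\boldsymbol{a}_{2})$ either vanishes or equals $t\boldsymbol{a}$ with $t=\lambda x_{1}+(1-\lambda)x_{2}>0$ and $\boldsymbol{a}\in A$ the corresponding convex combination of $\boldsymbol{a}_{1}$ and $\boldsymbol{a}_{2}$. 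It then suffices to prove the two inclusions between $A^{**}$ and $\overline{A^{+}}$.

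The inclusion $\overline{A^{+}}\subseteq A^{**}$ is the easy half. Every $\boldsymbol{c}\in A^{+}$ obeys $\langle\boldsymbol{c},\boldsymbol{b}\rangle\geq0$ for all $\boldsymbol{b}\in A^{*}=\left(A^{+}\right)^{*}$, so $A^{+}\subseteq A^{**}$; and $A^{**}$, being a dual cone, is closed (a dual cone is an intersection of closed half-spaces), so it also contains $\overline{A^{+}}$.

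The reverse inclusion $A^{**}\subseteq\overline{A^{+}}$ is the crux, and this is where a separation argument enters. I would argue the contrapositive: suppose $\boldsymbol{c}\notin\overline{A^{+}}$. Since $\overline{A^{+}}$ is closed and convex, strict separation of a point from a closed convex set in $\R^{d+1}$ supplies a vector $\boldsymbol{b}$ and a scalar $\alpha$ with $\langle\boldsymbol{c},\boldsymbol{b}\rangle<\alpha\leq\langle\boldsymbol{y},\boldsymbol{b}\rangle$ for all $\boldsymbol{y}\in\overline{A^{+}}$. Taking $\boldsymbol{y}=\boldsymbol{0}$ gives $\alpha\leq0$. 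Taking $\boldsymbol{y}\in\overline{A^{+}}$ arbitrary and rescaling it by $t>0$ --- still in the cone --- gives $t\langle\boldsymbol{y},\boldsymbol{b}\rangle\geq\alpha$ for all $t>0$, which forces $\langle\boldsymbol{y},\boldsymbol{b}\rangle\geq0$; in particular $\langle\boldsymbol{a},\boldsymbol{b}\rangle\geq0$ for all $\boldsymbol{a}\in A$, i.e. $\boldsymbol{b}\in A^{*}$. But then $\langle\boldsymbol{c},\boldsymbol{b}\rangle<\alpha\leq0$ exhibits $\boldsymbol{c}\notin A^{**}$, which completes the proof.

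I expect the only real subtlety to be book-keeping around the separation step --- invoking strict separation correctly and then promoting the separating functional to a genuine element of $A^{*}$ by exploiting the cone structure --- rather than anything deep. Note that compactness of $A$ plays no essential role in the argument as written; it merely guarantees, when additionally $\boldsymbol{0}\notin A$, that $A^{+}$ is already closed, in which case the closure in the statement is redundant. Keeping the hypothesis is harmless and matches the setting in which the lemma will be applied.
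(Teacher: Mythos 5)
Your proof is correct and follows essentially the same route as the paper's: the forward inclusion via nonnegativity of the pairing on $A^{+}$ (you use closedness of $A^{**}$ as an intersection of half-spaces where the paper uses a sequence-and-continuity argument, a cosmetic difference), and the reverse inclusion via strict hyperplane separation, rescaling to promote the separating vector into $A^{*}$, and using $\boldsymbol{0}\in\overline{A^{+}}$ to force the separating constant below zero. Your closing observation that compactness is inessential here (it only makes $A^{+}$ closed when $\boldsymbol{0}\notin A$, as in the paper's Lemma \ref{lem:Splusclosed}) is also accurate.
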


\begin{restatable}{lem}{dualconedualofcone}

\label{dualconedualofcone}For a compact and convex subset $A\subset\R^{d+1}$,
we have $A^{*}=\left(A^{+}\right)^{*}$. 

\end{restatable}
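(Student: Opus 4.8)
The plan is to prove the two set inclusions $A^{*}\subseteq\left(A^{+}\right)^{*}$ and $\left(A^{+}\right)^{*}\subseteq A^{*}$ directly from the definitions of the positive cone (Definition \ref{def:The-convex-cone}) and of the dual cone; in fact neither compactness nor convexity of $A$ is needed for the identity itself, and these hypotheses merely reflect the setting in which the lemma will later be applied.

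For the inclusion $\left(A^{+}\right)^{*}\subseteq A^{*}$, I would note that $A\subseteq A^{+}$, since every $\boldsymbol{a}\in A$ equals $1\cdot\boldsymbol{a}$ and hence lies in $A^{+}$. Therefore, if a vector $\boldsymbol{b}$ satisfies $\left\langle \boldsymbol{c},\boldsymbol{b}\right\rangle \geq0$ for all $\boldsymbol{c}\in A^{+}$, then in particular $\left\langle \boldsymbol{a},\boldsymbol{b}\right\rangle \geq0$ for all $\boldsymbol{a}\in A$, so $\boldsymbol{b}\in A^{*}$.

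For the reverse inclusion $A^{*}\subseteq\left(A^{+}\right)^{*}$, I would take $\boldsymbol{b}\in A^{*}$ and an arbitrary $\boldsymbol{c}\in A^{+}$, write $\boldsymbol{c}=x\boldsymbol{a}$ with $x\geq0$ and $\boldsymbol{a}\in A$, and compute $\left\langle \boldsymbol{c},\boldsymbol{b}\right\rangle =x\left\langle \boldsymbol{a},\boldsymbol{b}\right\rangle \geq0$, the last step using $x\geq0$ together with $\left\langle \boldsymbol{a},\boldsymbol{b}\right\rangle \geq0$ (which holds because $\boldsymbol{b}\in A^{*}$). Since $\boldsymbol{c}\in A^{+}$ was arbitrary, $\boldsymbol{b}\in\left(A^{+}\right)^{*}$, which completes the argument.

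There is no genuine obstacle here; the only point requiring a moment's care is the degenerate case $x=0$ (equivalently $\boldsymbol{0}\in A^{+}$), which is harmless because $\left\langle \boldsymbol{0},\boldsymbol{b}\right\rangle =0\geq0$, and the case $A=\emptyset$, which is vacuous. Alternatively, one could deduce the statement from Lemma \ref{Lemma: dual-cone} together with the general identities $A^{***}=A^{*}$ and $\left(\overline{B}\right)^{*}=B^{*}$, but the direct computation above is considerably shorter.
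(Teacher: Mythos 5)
Your proof is correct and follows essentially the same route as the paper's: a direct verification from the definitions that $\boldsymbol{b}\cdot\boldsymbol{a}\geq0$ for all $\boldsymbol{a}\in A$ is equivalent to $\boldsymbol{b}\cdot(x\boldsymbol{a})\geq0$ for all $\boldsymbol{a}\in A$ and $x\geq0$, the paper phrasing it as a chain of equivalences while you split it into the two inclusions. Your added remarks (that compactness and convexity are not needed, and the degenerate cases) are accurate but inessential.
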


Finally, it will also be important that the positive cone of a GPT
state space is always a closed set (also proven in Appendix \ref{conelemmata}).

\begin{restatable}{lem}{Splusclosed}

\label{lem:Splusclosed}Given a GPT state space $\cS$, the positive
cone $\cS^{+}$is closed.

\end{restatable}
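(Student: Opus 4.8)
The plan is to take an arbitrary sequence $(\boldsymbol{c}_n)$ in $\cS^{+}$ that converges in the Euclidean topology to some $\boldsymbol{c}\in\R^{d+1}$ and to show directly that $\boldsymbol{c}\in\cS^{+}$. Writing $\boldsymbol{c}_n=x_n\ba_n$ with $x_n\geq 0$ and $\ba_n\in\cS$, the structural fact I would exploit is that in the standard representation \eqref{eq:probstates} every state has its $(d+1)$-th coordinate equal to $1$; hence $\cS$ lies in the affine hyperplane $\{\bx\in\R^{d+1}\,:\,x_{d+1}=1\}$, it does not contain the origin, and---being compact---it is contained in a Euclidean ball of finite radius $M:=\max_{\ba\in\cS}\norm{\ba}$. (For a linearly transformed state space one simply replaces the coordinate functional $\bx\mapsto x_{d+1}$ by a linear functional strictly separating $\bO$ from the compact convex set $\cS$; the argument is otherwise identical, and since an invertible linear map is a homeomorphism it suffices to treat the standard representation anyway.)

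First I would recover the scale factors from the last coordinate: because $(\ba_n)_{d+1}=1$, the $(d+1)$-th coordinate of $\boldsymbol{c}_n$ equals $x_n$, so convergence $\boldsymbol{c}_n\to\boldsymbol{c}$ forces $x_n\to c_{d+1}=:x$ with $x\geq 0$. In particular $(x_n)$ is bounded, and the argument splits into two cases. If $x>0$, then $x_n>0$ for all large $n$ and $\ba_n=\boldsymbol{c}_n/x_n\to\boldsymbol{c}/x$; since $\cS$ is closed, $\boldsymbol{c}/x\in\cS$, whence $\boldsymbol{c}=x\,(\boldsymbol{c}/x)\in\cS^{+}$. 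If $x=0$, then $\norm{\boldsymbol{c}_n}=x_n\norm{\ba_n}\leq Mx_n\to 0$, so $\boldsymbol{c}=\bO=0\cdot\ba$ for any $\ba\in\cS$, which again lies in $\cS^{+}$. Either way $\boldsymbol{c}\in\cS^{+}$, so $\cS^{+}$ is closed.

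I expect the only delicate point to be the degenerate case $x=0$: one has to rule out the limit "escaping to infinity" along ray directions, and this is exactly where boundedness of $\cS$---equivalently, the normalization of the last coordinate, which keeps $(x_n)$ bounded along the convergent sequence---does the work. (Without such an input the positive cone of an unbounded convex set need not be closed.) Everything else is routine; note also that the statement genuinely requires a separate argument rather than following from Lemma~\ref{Lemma: dual-cone}, which only yields $\cS^{**}=\overline{\cS^{+}}$.
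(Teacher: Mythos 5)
Your proof is correct, but it takes a different and somewhat more direct route than the paper's. The paper makes no use of the normalization of states: it writes $\bx_j=p_j\bo_j$, extracts a convergent subsequence $\bo_{k(j)}\to\bo\in\cS$ via Bolzano--Weierstrass, and then rules out $p_{k(j)}\to\infty$ and $p_{k(j)}\to 0$ by contradiction, using only that $\cS$ is compact and that $\bO\notin\cS$; this proves the slightly more general fact that the positive cone of any compact convex set not containing the origin is closed. You instead read the scale factor off the normalization functional, $x_n=\bu\cdot\boldsymbol{c}_n\to\bu\cdot\boldsymbol{c}$, so the scalars converge automatically and no subsequence extraction is needed; compactness then enters only as closedness of $\cS$ in the generic case $x>0$ and as boundedness ($\norm{\boldsymbol{c}_n}\leq Mx_n\to 0$) in the degenerate case $x=0$. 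Both case analyses are sound, and your handling of the degenerate case is exactly the point where the paper's contradiction argument does its work. Your parenthetical remarks are also apt: for a transformed representation $\mathbf{M}\cS$ the functional $\bu$ is replaced by $\mathbf{M}^{-T}\bu$ (or one simply notes that $\mathbf{M}$ is a homeomorphism mapping $\cS^{+}$ onto $(\mathbf{M}\cS)^{+}$), and the lemma indeed cannot be deduced from Lemma \ref{Lemma: dual-cone}, which only gives $\cS^{**}=\overline{\cS^{+}}$. The trade-off is that your argument leans on the affine normalization of GPT state spaces, while the paper's is representation-free at the cost of a subsequence argument; within this paper's setting either is adequate.
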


\section{Unrestricted GPTs}

The main result of this paper is to establish a Gleason-type theorem
for a class of GPTs which we will introduce in this section, namely\emph{
almost noisy unrestricted }GPTs. First, we describe the no-restriction
hypothesis and the unrestricted GPTs it defines. Next, we define noisy
unrestricted (NU) GPTs and, building on this concept, we define \emph{almost}
NU GPTs.

\subsection{The no-restriction hypothesis \label{subsec:no-restriction-hypothesis}}

A particularly close relationship between state and effect spaces
exists in GPTs that satisfy the \emph{no-restriction hypothesis }\cite{janottanorestriction},\emph{
}i.e. GPTs with effect spaces consisting of \emph{all} linear maps
$e:\R^{d+1}\rightarrow\R$ such that $0\leq e\left(\boldsymbol{\omega}\right)\leq1$
for all $\boldsymbol{\omega}\in\mathcal{S}$. In such an \emph{unrestricted}
theory the state space defines a unique effect space, and \emph{vice
versa}. The effect space of a system with state space ${\cal S}$
in an unrestricted GPT is given by 
\begin{align}
E\left(\mathcal{S}\right) & =\left\{ \boldsymbol{e}\in\R^{d+1}|0\leq\boldsymbol{e}\cdot\boldsymbol{\omega}\leq1,\text{ for all }\boldsymbol{\omega}\in\mathcal{S}\right\} \nonumber \\
 & =\mathcal{S}^{*}\cap\left(\boldsymbol{u}-\mathcal{S}^{*}\right)\,,\label{eq: E map}
\end{align}
where $\boldsymbol{u}-\mathcal{S}^{*}=\left\{ \boldsymbol{u}-\boldsymbol{e}|\boldsymbol{e}\in\mathcal{S}^{*}\right\} $.
The classical bit is an example of an unrestricted GPT system. The
cones $\mathcal{S}^{*}$ and $\left(\boldsymbol{u}-\mathcal{S}^{*}\right)$
as well as their intersection are illustrated in Figure \ref{fig:bitcones}.

Conversely, if an unrestricted GPT system has an effect space $\mathcal{E}$
then a unique state space is associated with it, namely: 
\begin{align}
W\left(\mathcal{E}\right) & =\left\{ \boldsymbol{\omega}\in\R^{d+1}|\boldsymbol{e}\cdot\boldsymbol{\omega}\geq0\text{ for all }\boldsymbol{e}\in\mathcal{E}\text{ and }\boldsymbol{\omega}\cdot\boldsymbol{u}=1\right\} \nonumber \\
 & =\mathcal{E}^{*}\cap\boldsymbol{1}_{d+1}\,,\label{eq: W map}
\end{align}
where $\boldsymbol{1}_{d+1}=\left\{ \boldsymbol{\omega}\in\R^{d+1}|\boldsymbol{u}\cdot\boldsymbol{\omega}=1\right\} $;
we will omit the subscript $d+1$ whenever the dimension is clear
from the context. We have introduced the maps $E$ and $W$ in the
context of unrestricted GPTs but they are well-defined for the state
and effect spaces of any GPT. The maps will play an important role
in the derivation of our main result (see Section \ref{sec:Main-Result}).

\subsection{Noisy unrestricted GPTs\label{subsec:Noisy-unrestricted-GPTs}}

The class of NU GPTs consists of all unrestricted GPTs along with
a special subset of restricted GPTs. The included restricted GPTs
are those that can be thought of as unrestricted GPTs in which some
(or all) of the observables can only be measured with a limited efficiency,
or with some inherent noise.
\begin{defn}
\label{A-NU-GPT}A GPT system is \emph{noisy unrestricted} whenever
its state space $\cS$ and effect space $\mathcal{E}$ satisfy the
following property: for every vector $\boldsymbol{e}\in E\left(\mathcal{S}\right)$
there exists a number $p_{\E}\in\left(0,1\right]$ such that the rescaled
vector $p_{\E}\boldsymbol{e}$ is contained in the effect space $\mathcal{E}$.
\end{defn}
It follows that NU GPT systems are exactly those in which the positive
cones of the effect space $\mathcal{E}$ and the unrestricted effect
space $E(\cS)$ are equal. Moreover, Definition \ref{A-NU-GPT} is
equivalent to the statement that each NU GPT system is closely related
to an unrestricted GPT system in the following way: for each observable
$\mathbb{O}=\left\llbracket \E_{1},\E_{2},\ldots,\E_{n}\right\rrbracket $
in the unrestricted GPT system, there exists some $p\in\left(0,1\right]$
and an observable 
\begin{equation}
\mathbb{O}_{p}=\left\llbracket p\E_{1},p\E_{2},\ldots,p\E_{n},\left(1-p\right)\boldsymbol{u}\right\rrbracket ,\label{eq:Op}
\end{equation}
in the NU GPT system, while the state spaces of the two systems are
given by the same set. Thus, measuring the observable $\mathbb{O}_{p}$
of the NU GPT can be thought of as successfully measuring the observable
$\mathbb{O}$ (of the associated unrestricted GPT) with probability
$p$, and observing no outcome with probability $(1-p)$, regardless
of the state of the system. In the language of \cite{postselection},
a measurement of $\mathbb{O}_{p}$ can \emph{simulate} a measurement
of $\mathbb{O}$ when one allows for post-selection on the outcomes.
The case of $p_{\E}=1$ for all vectors in $\E\in E\left(\mathcal{S}\right)$
is included in Definition \ref{A-NU-GPT} for later convenience; in
other words, ``noiseless'' unrestricted GPTs---i.e. those in which
$\mathcal{E}=E\left(S\right)$ holds---are also considered to be
NU GPTs. All other NU GPTs, however, are restricted, i.e. they violate
the no-restriction hypothesis.

Figure \ref{fig:NU and R bits} shows two modified versions of the
bit GPT system that violate the no-restriction hypothesis, one of
which is a NU GPT system while the other is not. Further examples
of the three different varieties of GPT system---restricted, unrestricted
and noisy unrestricted---can be found in Section \ref{sec:Examples}.
\begin{figure*}
\centering{}\subfloat[\label{fig:NUbit}]{\centering{}\includegraphics[scale=2]{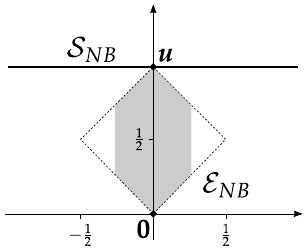}

}\hfill{}\subfloat[\label{fig:Rbit}]{\centering{}\includegraphics[scale=2]{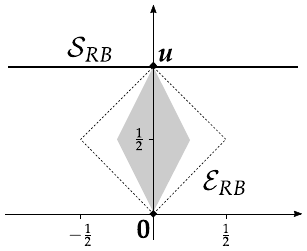}

}\caption{\label{fig:NU and R bits}State and effect spaces for two restrictions
of the classical-bit GPT system: (a) resulting in a NU GPT system
and (b) \emph{not} resulting in a NU GPT system (see Definition \ref{A-NU-GPT}).}
\end{figure*}

\subsection{Almost noisy unrestricted GPTs}

We now introduce almost NU GPTs as a relaxation of the class of NU
GPTs. Just as a NU GPT, a given almost NU GPT will be related to a
unique unrestricted GPT. However, given an observable $\mathbb{O}$
of the unrestricted GPT, it will be sufficient for the almost NU GPT
to include a noisy measurement of an observable \emph{arbitrarily
close to} $\mathbb{O}$. In NU GPTs, there is a noisy measurement
for each such observable, $\mathbb{O}$, so the requirement is met.
In a non-NU GPT it can only be met if the GPT has extremal effects
arbitrarily close to the zero effect. For example, consider the almost
NU bit depicted in Fig. \ref{fig:almostNUbit}. Every point on the
boundary of the effect space is extremal and as the non-zero extremal
effects approach zero they also approach the boundary of $\mathcal{E}_{B}^{+}$,
the positive cone of the unrestricted bit effect space.

\begin{figure}
\begin{centering}
\includegraphics[scale=2]{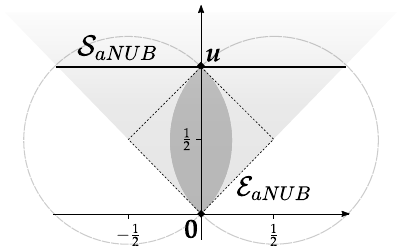}
\par\end{centering}
\caption{\label{fig:almostNUbit}State and effect spaces of a restriction of
the classical bit GPT that is almost NU but not NU. The effect space
$\mathcal{E}_{aNUB}$ (dark grey) is given by the intersection of
two discs (boundaries shown as dashed lines) centred at the two extremal
(but not $\bO$ or $\bu$) points of the unrestricted bit effect space
$\mathcal{\mathcal{E}}_{B}$ (dashed square). The positive cone of
$\mathcal{E}_{B}$ (light grey) is equal to the closure of the positive
cone of $\mathcal{E}_{aNUB}$.}
\end{figure}

In general an almost NU GPT is defined as follows.
\begin{defn}
\label{def:AlmostNU}A GPT system is \emph{almost} noisy unrestricted
(aNU) whenever its state space ${\cal S}$ and effect space $\mathcal{E}$
satisfy the following condition: for every vector $\boldsymbol{e}\in E\left(\mathcal{S}\right)$
and every $\epsilon>0$ there exists a vector $\boldsymbol{e}_{\epsilon}\in E(\mathcal{S})$
and a number $p_{\E}^{\epsilon}\in\left(0,1\right]$ such that $\left\Vert \boldsymbol{e}-\boldsymbol{e}_{\epsilon}\right\Vert <\epsilon$
and the rescaled vector $p_{\E}^{\epsilon}\boldsymbol{e}_{\epsilon}$
is contained in the effect space $\mathcal{E}$.
\end{defn}
Alternatively, we can characterise aNU GPTs as those that can simulate
any observable in their unrestricted counterpart arbitrarily well
via post-selection. Explicity, for every observable $\mathbb{O}=\left\llbracket \E_{1},\E_{2},\ldots,\E_{n}\right\rrbracket $
in the unrestricted GPT system and $\epsilon>0$, there exists a second
observable (the approximation) $\mathbb{O}_{\epsilon}=\left\llbracket \E_{1}',\E_{2}',\ldots,\E_{n}'\right\rrbracket $
of the unrestricted GPT system satisfying $\left\Vert \E_{j}-\E_{j}'\right\Vert \leq\epsilon$
for all $1\leq j\leq n$, and also a probability $p>0$ such that
$\mathbb{O}'=\left\llbracket p\E_{1}',p\E_{2}',\ldots,p\E_{n}',(1-p)\bu\right\rrbracket $
is an observable of the aNU GPT system. The fact that Def. \ref{def:AlmostNU}
is a necessary condition for a GPT to have this property is clear.
To show that it is also sufficient we need the following alternative
characterisation of aNU GPTs which will also be key to proving our
main result.

\begin{restatable}{lem}{aNU}

\label{lem:An-almost-NU}In an almost NU GPT the state space ${\cal S}$
and effect space ${\cal E}$ of each system are related by $E\left(\mathcal{S}\right)=\overline{\mathcal{E}^{+}}\cap\left(\boldsymbol{u}-\overline{\mathcal{E}^{+}}\right)$.

\end{restatable}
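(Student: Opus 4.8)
The plan is to prove the set equality $E(\mathcal{S})=\overline{\mathcal{E}^{+}}\cap(\boldsymbol{u}-\overline{\mathcal{E}^{+}})$ by establishing the two inclusions in turn, using throughout the identity $E(\mathcal{S})=\mathcal{S}^{*}\cap(\boldsymbol{u}-\mathcal{S}^{*})$ from Eq.~(\ref{eq: E map}), the fact that $\mathcal{E}\subseteq\mathcal{S}^{*}$ holds in any GPT, and the elementary observation that $\mathcal{S}^{*}$, being an intersection of closed half-spaces, is a closed convex cone.

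For the inclusion ``$\supseteq$'', combine $\mathcal{E}\subseteq\mathcal{S}^{*}$ with $\mathcal{S}^{*}$ being a closed cone to get $\mathcal{E}^{+}\subseteq\mathcal{S}^{*}$ and hence $\overline{\mathcal{E}^{+}}\subseteq\mathcal{S}^{*}$. Then, for $\boldsymbol{e}\in\overline{\mathcal{E}^{+}}\cap(\boldsymbol{u}-\overline{\mathcal{E}^{+}})$, both $\boldsymbol{e}$ and $\boldsymbol{u}-\boldsymbol{e}$ lie in $\mathcal{S}^{*}$, so $\boldsymbol{e}\cdot\boldsymbol{\omega}\geq0$ and $(\boldsymbol{u}-\boldsymbol{e})\cdot\boldsymbol{\omega}\geq0$ for every $\boldsymbol{\omega}\in\mathcal{S}$; since $\boldsymbol{u}\cdot\boldsymbol{\omega}=1$ on the state space, the second inequality says $\boldsymbol{e}\cdot\boldsymbol{\omega}\leq1$, whence $\boldsymbol{e}\in E(\mathcal{S})$. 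This direction does not use the almost-NU hypothesis.

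The reverse inclusion ``$\subseteq$'' is where Definition~\ref{def:AlmostNU} is essential. Fix $\boldsymbol{e}\in E(\mathcal{S})$. For each $\epsilon>0$ the definition supplies $\boldsymbol{e}_{\epsilon}\in E(\mathcal{S})$ with $\norm{\boldsymbol{e}-\boldsymbol{e}_{\epsilon}}<\epsilon$ and $p\in(0,1]$ with $p\,\boldsymbol{e}_{\epsilon}\in\mathcal{E}$; then $\boldsymbol{e}_{\epsilon}=\tfrac{1}{p}(p\,\boldsymbol{e}_{\epsilon})\in\mathcal{E}^{+}$, and letting $\epsilon\to0$ gives $\boldsymbol{e}\in\overline{\mathcal{E}^{+}}$. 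To get $\boldsymbol{e}\in\boldsymbol{u}-\overline{\mathcal{E}^{+}}$, I would use that $E(\mathcal{S})$ is invariant under $\boldsymbol{e}\mapsto\boldsymbol{u}-\boldsymbol{e}$ (immediate from $0\leq\boldsymbol{e}\cdot\boldsymbol{\omega}\leq1\Leftrightarrow0\leq(\boldsymbol{u}-\boldsymbol{e})\cdot\boldsymbol{\omega}\leq1$), so applying the preceding argument to $\boldsymbol{u}-\boldsymbol{e}\in E(\mathcal{S})$ yields $\boldsymbol{u}-\boldsymbol{e}\in\overline{\mathcal{E}^{+}}$, i.e. $\boldsymbol{e}\in\boldsymbol{u}-\overline{\mathcal{E}^{+}}$. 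Combining the two inclusions completes the proof. The argument is mostly bookkeeping; the only points needing a little care are that $\mathcal{S}^{*}$ is genuinely closed --- so that limits of elements of $\mathcal{E}^{+}$ remain inside it --- and that the almost-NU condition has to be invoked twice, once for $\boldsymbol{e}$ and once for $\boldsymbol{u}-\boldsymbol{e}$, via the reflection symmetry of $E(\mathcal{S})$. I do not expect any genuine obstacle beyond these.
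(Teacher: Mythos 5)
Your proof is correct, and it takes a somewhat different route from the paper's. The paper establishes the stronger cone-level identity $\overline{\mathcal{E}^{+}}=E(\mathcal{S})^{+}$: the inclusion $E(\mathcal{S})^{+}\subseteq\overline{\mathcal{E}^{+}}$ follows from Definition \ref{def:AlmostNU} by rescaling, while the reverse inclusion is obtained by proving $\left(\mathcal{S}^{*}\cap\left(\boldsymbol{u}-\mathcal{S}^{*}\right)\right)^{+}=\mathcal{S}^{*}$ (via the observation that $\boldsymbol{u}$ is an interior point of $\mathcal{S}^{*}$), so that $E(\mathcal{S})^{+}=\mathcal{S}^{*}$ is closed and contains $\mathcal{E}^{+}$; the stated equality then drops out as $E(\mathcal{S})^{+}\cap(\boldsymbol{u}-E(\mathcal{S})^{+})=\mathcal{S}^{*}\cap(\boldsymbol{u}-\mathcal{S}^{*})=E(\mathcal{S})$. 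You instead prove the two inclusions for $E(\mathcal{S})$ itself: the inclusion $\supseteq$ uses only $\mathcal{E}\subseteq\mathcal{S}^{*}$, the fact that $\mathcal{S}^{*}$ is a closed cone, and Eq.~(\ref{eq: E map}), with no appeal to the aNU hypothesis; the inclusion $\subseteq$ uses Definition \ref{def:AlmostNU} twice, once for $\boldsymbol{e}$ and once for $\boldsymbol{u}-\boldsymbol{e}$, exploiting the reflection symmetry $\boldsymbol{e}\mapsto\boldsymbol{u}-\boldsymbol{e}$ of $E(\mathcal{S})$. Your argument is slightly more elementary, since it bypasses the interior-point step needed for $\left(\mathcal{S}^{*}\cap\left(\boldsymbol{u}-\mathcal{S}^{*}\right)\right)^{+}=\mathcal{S}^{*}$; the paper's version buys the intermediate statement $\overline{\mathcal{E}^{+}}=E(\mathcal{S})^{+}$, which the main text explicitly reuses (the Ludwig-type condition and the simulation argument after Lemma \ref{lem:An-almost-NU}), and its proof also contains the converse implication (the cone condition implies Definition \ref{def:AlmostNU}), which is not part of the statement you were asked to prove but is needed for the ``alternative characterisation'' claim. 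So your proposal covers exactly the stated direction, correctly, by a leaner route that does not deliver those by-products.
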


The proof can be found in Appendix \ref{conelemmata}. Note that,
as shown in the proof of Lemma \ref{lem:An-almost-NU}, in an aNU
GPT we find $\overline{\mathcal{E}^{+}}=E(\cS)^{+}$, a condition
considered by Ludwig in his operational framework, see \cite[Chapt.  VI, Thm. 2.2.1]{ludwighilbertspace}.

Now we can show that an aNU GPT system contains an observable $\mathbb{O}'$
which is a noisy version of an arbitrarily close approximation $\mathbb{O}_{\epsilon}$
to any observable $\mathbb{O}$ in its unrestricted counterpart. Let
$\cS$ and $\mathcal{E}$ be state and effect spaces of a system in
an almost NU GPT and $E(\cS)$ be the effect space of the corresponding
unrestricted GPT system. Firstly, if $\mathbb{O}=\left\llbracket \bu/n,\bu/n,\ldots,\bu/n\right\rrbracket $
we may take $\mathbb{O}'=\mathbb{O}$. Secondly, for every $\mathbb{O}=\left\llbracket \E_{1},\E_{2},\ldots,\E_{n}\right\rrbracket \neq\left\llbracket \bu/n,\bu/n,\ldots,\bu/n\right\rrbracket $
in the unrestricted system and $\epsilon>0$ define $\E_{j}'=\delta\bu/n+(1-\delta)\E_{j}$
where 
\[
\delta=\min\left\{ \left.\frac{\epsilon}{\left\Vert \E_{j}-\bu/n\right\Vert }\right|1\leq j\leq n\text{ such that }\E_{j}\neq\bu/n\right\} 
\]
 for all $1\leq j\leq n$. Since $E(\cS)$ spans $\R^{d+1}$ and contains
$\bu-\E$ for all $\E\in E(\cS)$ it follows that $\bu/n$ and hence
$\E'_{j}$ are interior points of $E(\cS)^{+}$. By the equality $\overline{\mathcal{E}^{+}}=E(\cS)^{+}$
we have that $\E'_{j}\in\mathcal{E}^{+}$ and thus the observable
\[
\mathbb{O}'=\left\llbracket p\E_{1}',p\E_{2}',\ldots,p\E_{n}',(1-p)\bu\right\rrbracket 
\]
 is in the aNU GPT for some $p>0$. This observable is a noisy version
of the observable $\mathbb{O}_{\epsilon}=\left\llbracket \E_{1}',\E_{2}',\ldots,\E_{n}'\right\rrbracket $
which satisfies $\left\Vert \E_{j}-\E_{j}'\right\Vert \leq\epsilon$
for all $1\leq j\leq n$.

Note that, compared to the condition in Lemma \ref{lem:An-almost-NU},
NU GPTs satisfy the stronger condition $E\left(\mathcal{S}\right)=\mathcal{E}^{+}\cap\left(\boldsymbol{u}-\mathcal{E}^{+}\right)$.
Hence in GPTs which are aNU but \emph{not} NU we find $\mathcal{E}^{+}\neq\overline{\mathcal{E}^{+}}$,
i.e. the positive cones of the effect spaces are not closed. It follows
from the proof of Lemma \ref{lem:Splusclosed} that $\mathcal{E}^{+}\neq\overline{\mathcal{E}^{+}}$
only holds when there are extremal effects arbitrarily close to the
zero effect, such as in the aNU bit in Fig. \ref{fig:almostNUbit}.

\section{Gleason-type theorems for GPTs \label{sec:Main-Result}}

Gleason's theorem is motivated by the idea the probabilities of the
outcomes of any measurement performed on a quantum system should uniquely
define the state of the system. Thus, every state should come with
a \emph{frame function,} that is a probability assignment on the space
of projections (and later, quantum effects \cite{Busch2003}) such
that the probabilities of the disjoint outcomes of any measurement
sum to unity. In order to formulate a GTT for GPTs, we need to generalize
the concept of a frame function.
\begin{defn}
\label{A-generalized-probability}A \emph{frame function} on the effect
space $\mathcal{E}$ of a GPT system is a map $v:\R^{d+1}\rightarrow\R$
satisfying
\begin{enumerate}
\item[(V1)] $0\leq v\left(\boldsymbol{e}\right)\leq1$ for all effects $\E\in\mathcal{E}$;\label{enu:v1} 
\item[(V2)] $v\left(\E_{1}\right)+v\left(\E_{2}\right)+\ldots+v\left(\E_{n}\right)=1$
for all sequences of effects $\E_{1},\E_{2},\ldots,\E_{n}\in\mathcal{E}$
such that $\left\llbracket \E_{1},\E_{2},\ldots,\E_{n}\right\rrbracket $
is an observable in the GPT.
\end{enumerate}
\end{defn}
Considering measurements with only a finite number of possible outcomes
is sufficient for our purposes; thus, assumption (V2) is only required
to hold for finite sequences of effects. Countable sequences of effects
may be required if one considers infinite-dimensional systems.

In quantum theory the results of Gleason and Busch show that any frame
function must correspond to a density operator. In other words, there
are no states beyond those we already believe to exist under the assumption
that states must correspond to frame functions. We will take the analog
of this idea as the definition of a GTT for a GPT. 
\begin{defn}
\label{def:-GPT admitting a GTT} A GPT \emph{admits a Gleason-type
theorem} if and only if for each of its systems every frame function
on the effect space $\mathcal{E}$ can be represented by a state in
the state space $\mathcal{S}$.
\end{defn}
In this definition, the state space $\cS$ is prescribed by the GPT
and may be a subset of the set $W(\mathcal{E})$ of all mathematically
well-defined states given the effect space $\mathcal{E}$. The existence
of a GTT would allow the set of all possible states of a GPT system
to follow from the effect space via the natural assumption that a
state can be uniquely defined by its propensity to take each possible
value of every observable. The requirement that all mathematically
possible states are realised in a theory could be thought of as dual
to the no-restriction hypothesis, i.e. requiring that all effects
have a corresponding measurement outcome. We will show, however, that
the classes of GPTs that satisfy these requirements do \emph{not}
coincide.

\subsection{A Gleason-type theorem for almost NU GPTs\label{subsec:A-Gleason-type-theorem}}

After these preliminaries, let us state the main result of this paper
which identifies the condition under which Gleason-type theorems exist
for general probabilistic theories. 
\begin{thm}
\label{thm: NU main}Let $\mathcal{S}$ and $\mathcal{E}$ be the
state and effect spaces, respectively, of a GPT system. Any frame
function $v:\mathcal{E}\rightarrow\left[0,1\right]$ admits an expression
$v\left(\boldsymbol{e}\right)=\boldsymbol{e}\cdot\boldsymbol{\omega}$
for some $\boldsymbol{\omega}\in\mathcal{S}$ if and only if

\begin{equation}
\overline{\mathcal{E}^{+}}\cap\left(\boldsymbol{u}-\overline{\mathcal{E}^{+}}\right)=E\left(\mathcal{S}\right),\label{eq:NU}
\end{equation}
i.e. a GPT admits a Gleason-type theorem if and only if it is an almost
noisy unrestricted GPT.
\end{thm}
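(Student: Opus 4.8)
The plan is to prove the two directions separately, treating the ``if'' direction (aNU $\Rightarrow$ GTT) as the substantial one. For the ``only if'' direction, I would argue by contrapositive: suppose the GPT is not aNU. By Lemma~\ref{lem:An-almost-NU} (and its contrapositive), this means $\overline{\mathcal{E}^{+}}\cap(\boldsymbol{u}-\overline{\mathcal{E}^{+}})\neq E(\mathcal{S})$; since the effect space always satisfies $\mathcal{E}\subseteq E(\mathcal{S})$ and hence $\overline{\mathcal{E}^{+}}\subseteq E(\mathcal{S})^{+}$, the failure must be that $E(\mathcal{S})^{+}\cap(\boldsymbol{u}-E(\mathcal{S})^{+})$ strictly contains $\overline{\mathcal{E}^{+}}\cap(\boldsymbol{u}-\overline{\mathcal{E}^{+}})$ after intersecting with the hyperplane $\boldsymbol{u}\cdot\boldsymbol{\omega}=1$, or equivalently that $W(\mathcal{E})$ — the dual-defined state set — is strictly larger than the set of states $\boldsymbol{\omega}$ compatible with $\overline{\mathcal{E}^{+}}$ in a way that $\mathcal{S}$ cannot cover. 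Concretely, I would exhibit a linear functional $v(\boldsymbol{e})=\boldsymbol{e}\cdot\boldsymbol{\omega}_{0}$ for a suitable $\boldsymbol{\omega}_{0}\in W(\mathcal{E})\setminus\mathcal{S}$ (or a frame function not of linear form at all) that satisfies (V1) and (V2) but is not represented by any state in $\mathcal{S}$; a separating-hyperplane argument using that $\mathcal{S}$ is compact and convex produces the $\boldsymbol{\omega}_{0}$ once one knows $W(\mathcal{E})\supsetneq\mathcal{S}$ in this case.

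For the ``if'' direction, assume \eqref{eq:NU} holds; I must show every frame function $v$ on $\mathcal{E}$ equals $\boldsymbol{e}\cdot\boldsymbol{\omega}$ for some $\boldsymbol{\omega}\in\mathcal{S}$. The first key step is to show $v$ is the restriction of a \emph{linear} functional on $\R^{d+1}$. This is the GPT analog of the Busch/Gleason linearity argument: using (V2) on observables built from an effect $\E$ and its complement $\boldsymbol{u}-\E$, and on coarse-grainings guaranteed by \eqref{eq:coarse}, one first gets $v(\bO)=0$, $v(\boldsymbol{u})=1$, and additivity $v(\E_1+\E_2)=v(\E_1)+v(\E_2)$ whenever $\E_1,\E_2,\E_1+\E_2\in\mathcal{E}$; then one bootstraps to rational and (via (V1)-boundedness, hence monotonicity and local boundedness) real homogeneity, extending $v$ to $\mathcal{E}^{+}$ by $v(x\E)=xv(\E)$ and then to all of $\R^{d+1}$ by Lemma~\ref{Decomposition-1}, writing any $\boldsymbol{c}=\boldsymbol{a}-\boldsymbol{b}$ with $\boldsymbol{a},\boldsymbol{b}\in\mathcal{E}^{+}$ and setting $v(\boldsymbol{c})=v(\boldsymbol{a})-v(\boldsymbol{b})$; well-definedness follows from additivity. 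Hence $v(\boldsymbol{e})=\boldsymbol{e}\cdot\boldsymbol{\omega}_0$ for a unique $\boldsymbol{\omega}_0\in\R^{d+1}$.

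The second step is to locate $\boldsymbol{\omega}_0$ inside $\mathcal{S}$. From $v(\boldsymbol{u})=1$ we get $\boldsymbol{\omega}_0\cdot\boldsymbol{u}=1$, and from (V1) together with $v$ being linear and nonnegative on $\mathcal{E}$, hence on $\mathcal{E}^{+}$, we get $\boldsymbol{\omega}_0\in(\mathcal{E}^{+})^{*}=\mathcal{E}^{*}$; moreover $v(\boldsymbol{e})\le 1$ for all $\boldsymbol{e}\in\mathcal{E}$ forces $\boldsymbol{\omega}_0\in W(\mathcal{E})$ once one notes (via homogeneity and the fact that $E(\mathcal{S})$, and a fortiori $\mathcal{E}$, contains effects in every direction up to scaling) that this in fact pins $\boldsymbol{\omega}_0$ to satisfy $\boldsymbol{e}\cdot\boldsymbol{\omega}_0\le 1$ for all $\boldsymbol{e}\in E(\mathcal{S})$. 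Now I invoke \eqref{eq:NU}: by continuity, nonnegativity on $\mathcal{E}^{+}$ extends to nonnegativity on $\overline{\mathcal{E}^{+}}$, and likewise the upper bound extends, so $\boldsymbol{\omega}_0$ pairs into $[0,1]$ with every element of $\overline{\mathcal{E}^{+}}\cap(\boldsymbol{u}-\overline{\mathcal{E}^{+}})=E(\mathcal{S})$; that is, $\boldsymbol{e}\cdot\boldsymbol{\omega}_0\in[0,1]$ for all $\boldsymbol{e}\in E(\mathcal{S})$. Since $E(\mathcal{S})=\mathcal{S}^{*}\cap(\boldsymbol{u}-\mathcal{S}^{*})$ is the full unrestricted effect space of $\mathcal{S}$, having $\boldsymbol{e}\cdot\boldsymbol{\omega}_0\ge 0$ for all such $\boldsymbol{e}$ together with $\boldsymbol{\omega}_0\cdot\boldsymbol{u}=1$ places $\boldsymbol{\omega}_0\in W(E(\mathcal{S}))=E(\mathcal{S})^{*}\cap\boldsymbol{1}$, and by the duality lemmas (Lemmas~\ref{Lemma: dual-cone} and~\ref{dualconedualofcone}) applied to the compact convex $\mathcal{S}$ — using $\mathcal{S}^{**}=\overline{\mathcal{S}^{+}}=\mathcal{S}^{+}$ by Lemma~\ref{lem:Splusclosed} — this set is exactly $\mathcal{S}^{+}\cap\boldsymbol{1}=\mathcal{S}$. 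Hence $\boldsymbol{\omega}_0\in\mathcal{S}$, as required.

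The main obstacle is the linearity step: establishing additivity and then real-homogeneity of $v$ on $\mathcal{E}$ purely from (V1), (V2) and the coarse-graining closure \eqref{eq:coarse}, being careful that every coarse-graining and every rescaled effect invoked genuinely lies in $\mathcal{E}$ (not merely in $E(\mathcal{S})$) so that (V2) is applicable — this is where one must use that $\bu/n$ and small multiples of effects are available, and it is the analog of the delicate part of Busch's argument. A secondary subtlety is the passage from bounds on $\mathcal{E}$ to bounds on all of $E(\mathcal{S})$, which relies on aNU-ness precisely to rule out ``extra'' states; the separating-hyperplane construction in the converse direction is comparatively routine once $W(\mathcal{E})\supsetneq\mathcal{S}$ is in hand.
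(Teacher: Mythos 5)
Your overall architecture is the paper's: linearize the frame function in the style of Busch to obtain $v(\E)=\E\cdot\boldsymbol{\omega}_{0}$ for a unique $\boldsymbol{\omega}_{0}\in W(\mathcal{E})$ (this is Lemma \ref{thm:Main}), then use cone duality to show that condition \eqref{eq:NU} forces $W(\mathcal{E})=\cS$. Your ``if'' direction is essentially the paper's combination of Lemmata \ref{Lemma: EWE} and \ref{Lemma WES}, and is sound, with two small repairs: the parenthetical claim that $\mathcal{E}$ ``contains effects in every direction up to scaling'' is false for a general aNU effect space (and unnecessary --- your subsequent continuity argument via \eqref{eq:NU} is the correct route), and the final identification $W(E(\cS))=E(\cS)^{*}\cap\boldsymbol{1}=\cS$ needs, besides Lemmata \ref{Lemma: dual-cone}, \ref{dualconedualofcone} and \ref{lem:Splusclosed}, the cone identity $E(\cS)^{+}=\left(\cS^{*}\cap(\bu-\cS^{*})\right)^{+}=\cS^{*}$, which the paper proves using that $\bu$ is an interior point of $\cS^{*}$; this is precisely the content of Lemma \ref{Lemma WES}.

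The genuine gap is in the ``only if'' direction. The passage from the failure of \eqref{eq:NU} to $W(\mathcal{E})\supsetneq\cS$ is not an ``equivalently'': a strict inclusion on the effect side does not automatically dualize to a strict inclusion on the state side, and this implication is exactly the nontrivial content of Lemma \ref{Lemma: EWE}, i.e. the bipolar computation $E(W(\mathcal{E}))=\mathcal{E}^{**}\cap(\bu-\mathcal{E}^{**})=\overline{\mathcal{E}^{+}}\cap\left(\bu-\overline{\mathcal{E}^{+}}\right)$, which rests on $(\mathcal{E}^{*}\cap\boldsymbol{1})^{+}=\mathcal{E}^{*}$ together with Lemma \ref{Lemma: dual-cone}; your sketch contains no substitute for it. Moreover, the separating-hyperplane remark is circular as written: you invoke separation from the compact convex $\cS$ ``once one knows $W(\mathcal{E})\supsetneq\cS$'', but that strict inclusion is the very thing to be proved --- and once it is known, no separation is needed, since any $\boldsymbol{\omega}_{0}\in W(\mathcal{E})\setminus\cS$ already yields a frame function not represented by a state (because $\mathcal{E}$ spans $\R^{d+1}$, the representing vector is unique). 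A direct fix avoiding Lemma \ref{Lemma: EWE} is to separate from the cone, not from $\cS$: pick $\F\in E(\cS)$ with $\F\notin\overline{\mathcal{E}^{+}}$ (replacing $\F$ by $\bu-\F\in E(\cS)$ if necessary), and separate $\F$ from the closed convex cone $\overline{\mathcal{E}^{+}}$ to obtain $\bh$ with $\bh\cdot\ba\geq0$ for all $\ba\in\overline{\mathcal{E}^{+}}$ and $\bh\cdot\F<0$. If $\bu\cdot\bh>0$ set $\boldsymbol{\omega}_{0}=\bh/(\bu\cdot\bh)$; if $\bu\cdot\bh=0$ set $\boldsymbol{\omega}_{0}=\boldsymbol{\omega}+t\bh$ for some $\boldsymbol{\omega}\in\cS$ and $t$ sufficiently large. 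In either case $\boldsymbol{\omega}_{0}\in W(\mathcal{E})$ while $\F\cdot\boldsymbol{\omega}_{0}<0$, whereas $\F\cdot\boldsymbol{\omega}\geq0$ for every $\boldsymbol{\omega}\in\cS$; hence $\boldsymbol{\omega}_{0}\notin\cS$ and the linear frame function it defines is not representable by any state, as required.
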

Since quantum theory in finite dimensions is a GPT obeying the no-restriction
hypothesis, Busch's result \cite{Busch2003} is an immediate consequence
of Theorem \ref{thm: NU main}. The infinite-dimensional case, however,
will not be treated here.

\subsection{Consequences of a GTT for aNU GPTs}

Before presenting the proof of Theorem \ref{thm: NU main}, we briefly
explain how a GTT allows one to simplify the postulates used to describe
a specific GPT in an axiomatic approach. A simple way to state the
postulates, often used for quantum theory, is to describe the mathematical
objects that represent \emph{observables} and \emph{states} along
with the rule for calculating the \emph{probabilities} of measurement
outcomes (supplemented by postulates describing the composition of
systems and, possibly, the evolution of the system in time). In general,
for some GPT system with effect space $\mathcal{E}$ and state space
$\mathcal{S}$, such postulates would take the following form:
\begin{enumerate}
\item[(O)] The \emph{observables} of the system correspond exactly to the tuples
of vectors $\left\llbracket \boldsymbol{e}_{1},\boldsymbol{e}_{2},\ldots\right\rrbracket $
in $\mathcal{E}$ that sum to the vector $\boldsymbol{u}$, with each
vector corresponding to a possible disjoint outcome of measuring the
observable. 
\item[(S)] The \emph{states} of the system correspond exactly to vectors $\boldsymbol{\omega}\in\mathcal{S}$. 
\item[(P)] When measuring the observable $\left\llbracket \boldsymbol{e}_{1},\boldsymbol{e}_{2},\ldots\right\rrbracket $
on a system in state $\boldsymbol{\omega}\in\mathcal{S}$, the \emph{probability}
to obtain outcome $\boldsymbol{e}_{j}$ is given by $p_{j}(\boldsymbol{\omega})=\boldsymbol{e}_{j}\cdot\boldsymbol{\omega}$. 
\end{enumerate}
If there exists a GTT for the GPT in hand, then it could be recovered
by replacing the postulates (S) and (P) by the \emph{operationally
motivated} assumption that every state must have a corresponding frame
function defining its outcome probabilities, along with the converse
assumption that every frame function must have a corresponding state
in the theory\emph{. }Consequently, one only needs to supplement postulate
(O) with a single new postulate. 
\begin{enumerate}
\item[(F)] There exists a state of the system for every frame function on the
effect space $\mathcal{E}$. 
\end{enumerate}
Combined with the GTT and operational reasoning, postulates (O) and
(F) lead to the same theory as the postulates (O), (S) and (P). Thus,
the Gleason-type theorem would simplify the axiomatic formulation
of the GPT, just as the theorems by Gleason or Busch do in the case
of quantum theory.

It could be argued that the operational assumptions present in the
derivation of the GPT framework are better suited to simplifying the
postulates (O), (S) and (P). In Appendix \ref{sec: alternative simplification of axions for a GPT}
we compare this strategy with the simplification achieved using Theorem
\ref{thm: NU main}.

Our result also opens up an alternative step in establishing the GPT
framework. In Section \ref{sec:Generalized-probabilistic-theori}
we reviewed the derivation of the framework from operational principles
which motivates the structure of \emph{state} \emph{spaces} in GPTs
from the assumption of fiducial measurement outcomes, as in \cite{Hardy2001a,BarrettGPT,Masanes2011,short2010strong,Barnum2014,Sainz2018}.
The approach then arrives at the effect-space structure by motivating
effects as affine functions on the states space. 

One may, however, invert this process and arrive at the same framework.
By assuming that fiducial \emph{states} exist one can motivate the
structure of \emph{effect} \emph{spaces} in GPTs\footnote{The approach of motivating the structure of measurements first features
in other formalisms; for example, see Ludwig's work on operational
theories, summarised in \cite{ludwighilbertspace}, or the theory
of test-spaces and its predecessors \cite{foulis1981empirical,randall1973operational,foulis1993logicoalgebraic}.}, namely as convex, compact subsets of a real vector space containing
the zero vector and a vector $\bu$ such that $\bu-\E$ is in the
set for every effect $\E$\footnote{This structure predates the GPT framework, see \cite[Chapt.~IV, Thm.~1.2]{ludwighilbertspace}.}.
At this point, the standard approach would motivate states as affine
functionals on the effect space using arguments based on classical
mixtures.

Our results offer a mathematically weaker alternative: it is sufficient
to combine the effect-space structure with a minimal set of two-outcome
observables and their convex combinations as described in Section
\ref{sec:Simulability}. Then, a corollary to our Gleason-type theorem
may be used to recover the structure of the state space (as a subset
of $W\left(\mathcal{E}\right)$ where $\mathcal{E}$ is the effect
space of the model) from the assumption that a state must come with
a unique frame function on this minimal set of observables.

In other words, assuming that states are frame functions offers an
alternative to the mathematically stron\-ger assumption that states
are linear functionals on the real vector space containing the effect
space. The details of this approach are described in Appendix~\ref{sec:Deriving-the-GPT}.

We highlight one observation following from this derivation, namely,
the inequivalence of the no-restriction hypothesis---as described
in Section. \ref{subsec:no-restriction-hypothesis}---and the dual
assumption in the fiducial-states approach, which we will call the
\emph{no-state-re\-stric\-tion hypothesis}\footnote{Following this terminology the no-restriction hypothesis should be
more accurately called the no-\emph{effect}-restriction hypothesis.
For consistency with the literature and brevity we will stick with
with the standard nomenclature.}. Recall that the no-restriction hypothesis assumes that, given a
state space $\cS$, every mathematically valid effect is indeed an
effect of the system, i.e. $\mathcal{E}=E(\cS)$. The no-state-re\-stric\-tion
hypothesis says that, given an effect space $\mathcal{E}$, every
mathematically valid state appears in the theory, i.e. $\cS=W(\mathcal{E})$.
As shown in the proof of Theorem \ref{thm: NU main} below, the relation
$\cS=W(\mathcal{E})$ holds for exactly almost NU GPTs. Therefore
we can conclude that the no-restriction and no-state-restriction hypotheses
are \emph{not} equivalent and the former constitutes a stronger assumption.

Since both approaches are operationally valid, the no-restriction
hypothesis and the no-state-restriction hypothesis appear to be motivated
equally well. One could argue that their inequivalence demonstrates
that neither assumption is valid. In any case, if one is ready to
assume either of them, one should be willing to assume the other,
too. Thus, the stronger no-effect-restriction hypothesis should be
taken, as seems to have happened naturally in the literature, e.g
in \cite{BarrettGPT,short2010strong,janotta2011limits,lamithesis,lami2021framework}.
Conversely, they should also both be rejected together, meaning no-go
theorems such as in \cite{Sainz2018} would benefit from targetting
the weaker no-state-restriction hypothesis.

\subsection{Proof of Theorem \ref{thm: NU main}}

Using Definition \ref{def:-GPT admitting a GTT} of a GTT, Theorem
\ref{thm: NU main} shows that aNU GPTs are exactly the class of GPTs
that admit GTTs. We will prove this result in two steps: (i) in Lemma
\ref{thm:Main}, a frame function on a GPT effect space $\mathcal{E}$
is found to correspond to a vector in the set $W\left(\mathcal{E}\right)$
defined in Section \ref{subsec:no-restriction-hypothesis}; (ii) the
set $W\left(\mathcal{E}\right)$ is found to correspond to the state
space of a GPT system if and only if the GPT is in the class of aNU
GPTs, in Lemmata \ref{Lemma WES} and \ref{Lemma: EWE}.

Step (i): The proof of the following lemma\footnote{This lemma was independently shown in \cite{farid2019}, where it
is considered as a GTT for GPTs.} follows the one for the quantum case given in \cite{Busch2003}.

\begin{restatable}{lem}{prop}

\label{thm:Main}Let $\mathcal{E}$ be an effect space of a GPT. Any
frame function $v$ on $\mathcal{E}$ admits an expression 
\begin{equation}
v\left(\E\right)=\E\cdot\boldsymbol{\omega},
\end{equation}
for some vector $\boldsymbol{\omega}\in W\left(\mathcal{E}\right)$
and all effects $\E\in\mathcal{E}$. 

\end{restatable}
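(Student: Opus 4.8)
The plan is to follow the structure of Busch's argument \cite{Busch2003}, adapting it to the GPT setting. The goal is to show that a frame function $v$ on an effect space $\mathcal{E}$ is the restriction of a linear functional on $\R^{d+1}$, and that this functional is given by a vector $\boldsymbol{\omega}$ lying in $W(\mathcal{E}) = \mathcal{E}^* \cap \boldsymbol{1}_{d+1}$. First I would establish that $v$ is \emph{additive} on $\mathcal{E}$ in the sense that $v(\E_1 + \E_2) = v(\E_1) + v(\E_2)$ whenever $\E_1, \E_2$ and $\E_1 + \E_2$ all lie in $\mathcal{E}$. This follows from (V2): since $\left\llbracket \E_1 + \E_2, \bu - \E_1 - \E_2\right\rrbracket$ and $\left\llbracket \E_1, \E_2, \bu - \E_1 - \E_2\right\rrbracket$ are both observables (the coarse-graining condition (\ref{eq:coarse}) guarantees $\E_1 + \E_2 \in \mathcal{E}$ is already given), applying (V2) to each and subtracting yields $v(\E_1 + \E_2) = v(\E_1) + v(\E_2)$. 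A special case gives $v(\bu) = 1$ and $v(\bO) = 0$, and more generally $v(\bu - \E) = 1 - v(\E)$.

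Next I would bootstrap additivity to $\mathbb{Q}_{\geq 0}$-homogeneity on $\mathcal{E}$: from additivity, $v(n\E) = n v(\E)$ whenever $n\E \in \mathcal{E}$, and hence $v((m/n)\E) = (m/n) v(\E)$ for rationals $m/n \in [0,1]$ — here one uses that $\mathcal{E}$ is convex and contains $\bO$, so $(m/n)\E \in \mathcal{E}$. The key step is then to upgrade this to continuity/monotonicity so as to get $\R_{\geq 0}$-homogeneity and, ultimately, that $v$ extends linearly. The standard trick (as in \cite{Busch2003}) is to note that $v$ is bounded ($0 \leq v \leq 1$ by (V1)) and that, by additivity, for effects $\E \leq \F$ in $\mathcal{E}$ (meaning $\F - \E \in \mathcal{E}$) we have $v(\E) \leq v(\F)$; monotonicity plus boundedness plus $\mathbb{Q}$-homogeneity forces $v(t\E) = t v(\E)$ for all real $t \in [0,1]$ by a squeezing argument with rational approximants. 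Now extend $v$ to the positive cone $\mathcal{E}^+$ by $\tilde v(x\E) := x\, v(\E)$ for $x \geq 0$, $\E \in \mathcal{E}$; homogeneity and additivity on $\mathcal{E}$ make this well-defined and additive on $\mathcal{E}^+$. Finally, by Lemma \ref{Decomposition-1}, every $\boldsymbol{c} \in \R^{d+1}$ decomposes as $\boldsymbol{a} - \boldsymbol{b}$ with $\boldsymbol{a}, \boldsymbol{b} \in \mathcal{E}^+$, so setting $\tilde v(\boldsymbol{c}) := \tilde v(\boldsymbol{a}) - \tilde v(\boldsymbol{b})$ gives a well-defined (independent of the decomposition, by additivity) linear functional on $\R^{d+1}$ extending $v$. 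By the Riesz representation (finite dimensions), $\tilde v(\boldsymbol{c}) = \boldsymbol{c} \cdot \boldsymbol{\omega}$ for a unique $\boldsymbol{\omega} \in \R^{d+1}$.

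It remains to check $\boldsymbol{\omega} \in W(\mathcal{E})$, i.e. $\boldsymbol{\omega} \in \mathcal{E}^*$ and $\boldsymbol{\omega} \cdot \bu = 1$. The latter is immediate from $v(\bu) = 1$. For the former, every $\E \in \mathcal{E}$ has $\boldsymbol{\omega} \cdot \E = v(\E) \geq 0$ by (V1), so $\boldsymbol{\omega}$ pairs non-negatively with all of $\mathcal{E}$, hence $\boldsymbol{\omega} \in \mathcal{E}^*$. This completes the argument.

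The main obstacle I anticipate is the passage from $\mathbb{Q}$-homogeneity to full $\R$-homogeneity of $v$ on $\mathcal{E}$ — equivalently, establishing enough regularity of $v$ to rule out pathological (non-measurable, Hamel-basis-type) additive functionals. In the quantum case one leans on the operator order structure; here the analogous input is the monotonicity $v(\E) \leq v(\F)$ for $\E \leq \F$ in the cone order together with the uniform bound from (V1), which together pin $v$ down via rational sandwiching. One subtlety to handle carefully is that the order-interval $[\bO, \E]$ within $\mathcal{E}$ must be rich enough (it is, since $\mathcal{E}$ is convex and contains $\bO$) for the squeezing to apply at every $\E$. I would also double-check that the decomposition of Lemma \ref{Decomposition-1} interacts correctly with additivity so that $\tilde v$ on $\R^{d+1}$ is genuinely well-defined, which is the one place the interior-point argument of that lemma is essential.
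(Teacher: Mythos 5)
Your proposal is correct and follows essentially the same route as the paper's own proof: additivity from (V2) via comparing the fine-grained and coarse-grained observables, rational homogeneity from convexity, real homogeneity by monotone rational sandwiching, extension to the cone $\mathcal{E}^{+}$ and then to all of $\R^{d+1}$ via Lemma \ref{Decomposition-1}, and finally membership of $\boldsymbol{\omega}$ in $W(\mathcal{E})$ from (V1) and $v(\bu)=1$. The "main obstacle" you flag (ruling out pathological additive functionals) is resolved exactly as you suggest, so no changes are needed.
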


Lemma \ref{thm:Main} shows that if one defines states as frame functions
on an effect space $\mathcal{E}$, then the associated state space
must be $W\left(\mathcal{E}\right)$.

Step (ii): We will now prove that the set $W\left(\mathcal{E}\right)$
corresponds to the state space of a GPT system with effect space $\mathcal{E}$
if and only if the GPT is an aNU GPT. Two lemmata will be needed to
show that $W\left(E\left(\mathcal{S}\right)\right)=\mathcal{S}$ holds
for \emph{all} GPTs while the relation $E\left(W\left(\mathcal{E}\right)\right)=E\left(\mathcal{S}\right)$
only holds for aNU GPTs. The proofs can be found in Appendix \ref{WESandEWE}.

\begin{restatable}{lem}{WES}

\label{Lemma WES}For any GPT system with state space $\mathcal{S}$,
we have $W\left(E\left(\mathcal{S}\right)\right)=\mathcal{S}$.

\end{restatable}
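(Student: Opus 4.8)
The plan is to show the two inclusions $W(E(\cS))\supseteq\cS$ and $W(E(\cS))\subseteq\cS$ separately. The first inclusion is essentially by construction: any $\bo\in\cS$ satisfies $\bu\cdot\bo=1$ by the form \eqref{eq:probstates} of GPT state vectors, and every $\E\in E(\cS)$ satisfies $\E\cdot\bo\geq0$ by the very definition of $E(\cS)$ in \eqref{eq: E map}; hence $\bo\in\cS^{*}\cap\boldsymbol{1}=W(E(\cS))$. So the whole content is in the reverse inclusion.

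For $W(E(\cS))\subseteq\cS$, I would argue by contradiction using a separating-hyperplane / Hahn--Banach argument. Suppose $\bo_{0}\in W(E(\cS))$ but $\bo_{0}\notin\cS$. Since $\cS$ is compact and convex and $\{\bo_{0}\}$ is a disjoint compact convex set, there is a vector $\boldsymbol{h}\in\R^{d+1}$ and a real number $\alpha$ with $\boldsymbol{h}\cdot\bo<\alpha<\boldsymbol{h}\cdot\bo_{0}$ for all $\bo\in\cS$. The goal is to convert $\boldsymbol{h}$ into a genuine effect $\E\in E(\cS)$ that is violated by $\bo_{0}$, i.e. with $\E\cdot\bo_{0}<0$ or $\E\cdot\bo_{0}>1$, contradicting $\bo_{0}\in W(E(\cS))=E(\cS)^{*}\cap\boldsymbol{1}$. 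Because $\bo_{0}$ and all $\bo\in\cS$ lie on the hyperplane $\boldsymbol{u}\cdot\bx=1$, I can replace $\boldsymbol{h}$ by $\boldsymbol{h}'=\boldsymbol{h}-\beta\boldsymbol{u}$ for any $\beta$ without changing the gap between $\boldsymbol{h}\cdot\bo_{0}$ and $\sup_{\bo\in\cS}\boldsymbol{h}\cdot\bo$; choosing $\beta$ appropriately I can arrange $\sup_{\bo\in\cS}\boldsymbol{h}'\cdot\bo\in[0,1)$ while $\boldsymbol{h}'\cdot\bo_{0}$ still exceeds this supremum. A further rescaling $\E=t\boldsymbol{h}'$ with $t>0$ small keeps $0\leq\E\cdot\bo\leq1$ for all $\bo\in\cS$ — so $\E\in E(\cS)$ — yet, by scaling appropriately (or iterating), one can push $\E\cdot\bo_{0}$ above $1$, contradicting $\E\cdot\bo_{0}\leq1$. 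Making these rescalings compatible is the fiddly but elementary bookkeeping part.

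The one subtlety worth flagging is the lower bound: effects must satisfy $0\leq\E\cdot\bo$ as well, so when adjusting $\boldsymbol{h}'$ and $t$ I must ensure $\inf_{\bo\in\cS}\E\cdot\bo\geq0$ simultaneously. Since $\cS$ is compact this infimum is attained and finite, so shifting by a suitable multiple of $\boldsymbol{u}$ (which is harmless on the hyperplane $\boldsymbol{u}\cdot\bx=1$) and then rescaling handles both bounds at once; the key point is that the constant function $\bo\mapsto\boldsymbol{u}\cdot\bo$ is identically $1$ on both $\cS$ and $\bo_{0}$, giving a free parameter to slide the affine values into $[0,1]$ while preserving the strict separation. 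I expect this affine-shift-and-rescale step to be the main obstacle — not conceptually deep, but requiring care to keep all inequalities consistent — whereas the existence of the separating hyperplane and the trivial inclusion are routine.
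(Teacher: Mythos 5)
Your proposal is correct, but it takes a genuinely different route from the paper. The paper never argues pointwise: it computes $W\left(E\left(\cS\right)\right)=\left(\cS^{*}\cap\left(\bu-\cS^{*}\right)\right)^{*}\cap\boldsymbol{1}$, uses Lemma \ref{dualconedualofcone} and the identity $\left(\cS^{*}\cap\left(\bu-\cS^{*}\right)\right)^{+}=\cS^{*}$ to reduce this to $\cS^{**}\cap\boldsymbol{1}$, and then invokes the bipolar-type Lemma \ref{Lemma: dual-cone} together with the closedness of $\cS^{+}$ (Lemma \ref{lem:Splusclosed}) to get $\cS^{+}\cap\boldsymbol{1}=\cS$. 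Your separating-hyperplane argument effectively inlines, for this one statement, the separation step that the paper has packaged into Lemma \ref{Lemma: dual-cone}; it is more elementary and self-contained, and it sidesteps Lemma \ref{lem:Splusclosed} entirely by exploiting compactness of $\cS$ directly, whereas the paper's modular cone lemmas pay off because they are reused in Lemma \ref{Lemma: EWE} and Lemma \ref{lem:An-almost-NU}. Two small points to tighten. First, the contradiction you aim for, $\E\cdot\bo_{0}>1$, is not directly excluded by the definition of $W$, which only demands $\E\cdot\bo_{0}\geq0$ and $\bu\cdot\bo_{0}=1$; you must route it through the complement effect $\bu-\E\in E\left(\cS\right)$, whose value on $\bo_{0}$ becomes negative, or more simply flip the sign of the separating functional (take $\E=t\left(M\bu-\bh\right)$ with $M=\max_{\bo\in\cS}\bh\cdot\bo$) to violate non-negativity directly. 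Second, "rescaling by small $t$" is the wrong phrasing: after shifting so that $\min_{\bo\in\cS}\bh'\cdot\bo=0$, the right choice is $t=1/\max_{\bo\in\cS}\bh'\cdot\bo$ (with the degenerate case where this maximum vanishes handled separately), which keeps $\E$ in $E\left(\cS\right)$ while the strict gap pushes $\E\cdot\bo_{0}$ outside $\left[0,1\right]$. Also, in your easy inclusion the set should be written $E\left(\cS\right)^{*}\cap\boldsymbol{1}$ rather than $\cS^{*}\cap\boldsymbol{1}$; this is a notational slip, not a gap.
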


\begin{restatable}{lem}{EWE}

\label{Lemma: EWE}Given a GPT system with state and effect spaces
$\mathcal{S}$ and $\mathcal{E}$, respectively, the relation $E\left(W\left(\mathcal{E}\right)\right)=E\left(\mathcal{S}\right)$
holds if and only if $E\left(\mathcal{S}\right)=\overline{\mathcal{E}^{+}}\cap\left(\boldsymbol{u}-\overline{\mathcal{E}^{+}}\right)$.

\end{restatable}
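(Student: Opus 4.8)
The plan is to reduce the stated equivalence to a single \emph{unconditional} identity, valid for every GPT system, namely
\begin{equation}
E\left(W\left(\mathcal{E}\right)\right)=\overline{\mathcal{E}^{+}}\cap\left(\boldsymbol{u}-\overline{\mathcal{E}^{+}}\right).\label{eq:plan-identity}
\end{equation}
Once \eqref{eq:plan-identity} is in hand the lemma follows with no further work: the asserted equation $E\left(W\left(\mathcal{E}\right)\right)=E\left(\mathcal{S}\right)$ is, after substituting \eqref{eq:plan-identity}, literally the equation $\overline{\mathcal{E}^{+}}\cap\left(\boldsymbol{u}-\overline{\mathcal{E}^{+}}\right)=E\left(\mathcal{S}\right)$, so each side of the ``if and only if'' holds exactly when the other does. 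Thus the two directions require no separate case analysis, and the whole content of the lemma sits in proving \eqref{eq:plan-identity}.

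To establish \eqref{eq:plan-identity} I would start from the definition of the map $E$ in Eq.~(\ref{eq: E map}) applied to the normalised set $W\left(\mathcal{E}\right)$, giving $E\left(W\left(\mathcal{E}\right)\right)=W\left(\mathcal{E}\right)^{*}\cap\left(\boldsymbol{u}-W\left(\mathcal{E}\right)^{*}\right)$. Everything then reduces to computing the single dual cone $W\left(\mathcal{E}\right)^{*}$, and the key claim is
\begin{equation}
W\left(\mathcal{E}\right)^{*}=\overline{\mathcal{E}^{+}}.\label{eq:plan-dual}
\end{equation}
By Lemma~\ref{Lemma: dual-cone} we may write $\overline{\mathcal{E}^{+}}=\mathcal{E}^{**}$, so \eqref{eq:plan-dual} becomes $W\left(\mathcal{E}\right)^{*}=\mathcal{E}^{**}$. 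Since $W\left(\mathcal{E}\right)=\mathcal{E}^{*}\cap\boldsymbol{1}$ and dual cones are insensitive both to passing to the positive cone and to taking closures, it is enough to prove the purely conical statement $\overline{W\left(\mathcal{E}\right)^{+}}=\mathcal{E}^{*}$; then the chain $W\left(\mathcal{E}\right)^{*}=\left(W\left(\mathcal{E}\right)^{+}\right)^{*}=\left(\mathcal{E}^{*}\right)^{*}=\mathcal{E}^{**}$ closes, where the first equality is Lemma~\ref{dualconedualofcone} and the middle one uses $\overline{W\left(\mathcal{E}\right)^{+}}=\mathcal{E}^{*}$ together with the invariance of dual cones under closure. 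To invoke Lemma~\ref{dualconedualofcone} I first record that $W\left(\mathcal{E}\right)$ is compact: it is closed and convex as the intersection of the closed cone $\mathcal{E}^{*}$ with the hyperplane $\boldsymbol{1}$, and it is bounded because a nonzero recession direction would be a nonzero $\boldsymbol{v}\in\mathcal{E}^{*}$ with $\boldsymbol{u}\cdot\boldsymbol{v}=0$; since $\boldsymbol{u}-\boldsymbol{e}\in\mathcal{E}$ this forces $\boldsymbol{e}\cdot\boldsymbol{v}=0$ for all $\boldsymbol{e}\in\mathcal{E}$, hence $\boldsymbol{v}=\boldsymbol{0}$ as $\mathcal{E}$ spans $\R^{d+1}$.

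The inclusion $\overline{W\left(\mathcal{E}\right)^{+}}\subseteq\mathcal{E}^{*}$ is immediate, since $W\left(\mathcal{E}\right)\subseteq\mathcal{E}^{*}$ and $\mathcal{E}^{*}$ is a closed cone. The main obstacle is the reverse inclusion $\mathcal{E}^{*}\subseteq\overline{W\left(\mathcal{E}\right)^{+}}$, whose difficulty is exactly the boundary rays of $\mathcal{E}^{*}$ that never meet the normalising hyperplane. For $\boldsymbol{\nu}\in\mathcal{E}^{*}$ with $\boldsymbol{u}\cdot\boldsymbol{\nu}>0$ one simply rescales, $\boldsymbol{\nu}/\left(\boldsymbol{u}\cdot\boldsymbol{\nu}\right)\in\mathcal{E}^{*}\cap\boldsymbol{1}=W\left(\mathcal{E}\right)$, so $\boldsymbol{\nu}\in W\left(\mathcal{E}\right)^{+}$; here $\boldsymbol{u}\cdot\boldsymbol{\nu}\geq0$ automatically, because $\boldsymbol{u}\in\mathcal{E}$. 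The remaining case $\boldsymbol{u}\cdot\boldsymbol{\nu}=0$ I would treat by perturbing into the interior half-space: choosing any state $\boldsymbol{\omega}_{0}\in\mathcal{S}\subseteq\mathcal{E}^{*}$, which satisfies $\boldsymbol{u}\cdot\boldsymbol{\omega}_{0}=1$, the vectors $\boldsymbol{\nu}+t\boldsymbol{\omega}_{0}$ lie in the convex cone $\mathcal{E}^{*}$ and obey $\boldsymbol{u}\cdot\left(\boldsymbol{\nu}+t\boldsymbol{\omega}_{0}\right)=t>0$, so they belong to $W\left(\mathcal{E}\right)^{+}$ and converge to $\boldsymbol{\nu}$ as $t\to0^{+}$. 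This yields $\boldsymbol{\nu}\in\overline{W\left(\mathcal{E}\right)^{+}}$, completing \eqref{eq:plan-dual} and hence \eqref{eq:plan-identity}. I expect this limiting/perturbation step to be the only genuinely non-formal part of the argument; the rest is the bookkeeping of dual cones furnished by Lemmata~\ref{Lemma: dual-cone} and \ref{dualconedualofcone}.
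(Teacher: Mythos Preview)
Your proposal is correct and follows essentially the same route as the paper: both arguments compute $E\left(W\left(\mathcal{E}\right)\right)$ unconditionally, reduce it via Lemma~\ref{dualconedualofcone} to $\left(\left(\mathcal{E}^{*}\cap\boldsymbol{1}\right)^{+}\right)^{*}\cap\left(\boldsymbol{u}-\left(\left(\mathcal{E}^{*}\cap\boldsymbol{1}\right)^{+}\right)^{*}\right)$, identify the inner cone with $\mathcal{E}^{*}$, and finish with Lemma~\ref{Lemma: dual-cone}. The one difference is that you take a closure and run a perturbation argument for the boundary case $\boldsymbol{u}\cdot\boldsymbol{\nu}=0$, whereas the paper asserts $\left(\mathcal{E}^{*}\cap\boldsymbol{1}\right)^{+}=\mathcal{E}^{*}$ outright; in fact your own compactness argument already shows that this boundary case is vacuous (any $\boldsymbol{\nu}\in\mathcal{E}^{*}$ with $\boldsymbol{u}\cdot\boldsymbol{\nu}=0$ must vanish since $\mathcal{E}$ spans $\R^{d+1}$), so the closure and perturbation step can be dropped and the equality holds without the bar.
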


We are now in a position to prove our main result, Theorem \ref{thm: NU main},
announced in the previous section. It states that a general probabilistic
theory admits a Gleason-type theorem if and only if it is \emph{almost
noisy unrestricted}. The result is an immediate consequence of the
lemmata just shown.
\begin{proof}
By Lemma \ref{thm:Main} we know that frame functions must be of the
form $v\left(\boldsymbol{e}\right)=\boldsymbol{e}\cdot\boldsymbol{\omega}$
for some $\boldsymbol{\omega}\in W\left(\mathcal{E}\right)$. Thus,
to conclude the proof we need to show that the state space of a GPT
system coincides with the set $W\left(\mathcal{E}\right)$ exactly
in almost NU GPTs, i.e. $W\left(\mathcal{E}\right)=\mathcal{S}$,
if and only if Eq. (\ref{eq:NU}) holds.

Let $W\left(\mathcal{E}\right)=\mathcal{S}'$. Firstly, assume that
Eq. (\ref{eq:NU}) holds. By Lemma \ref{Lemma: EWE} we have 
\begin{equation}
E\left(\mathcal{S}'\right)=E\left(W\text{\ensuremath{\left(\mathcal{E}\right)}}\right)=E\left(\mathcal{S}\right).
\end{equation}
Now, by applying the $W$ map to both sides of this equation and using
Lemma \ref{Lemma WES}, we find 
\begin{equation}
W\left(E\left(\mathcal{S}'\right)\right)=\mathcal{S}'=W\left(E\left(\mathcal{S}\right)\right)=\mathcal{S}.
\end{equation}
Secondly, assume that Eq. (\ref{eq:NU}) does not hold, i.e. $\overline{\mathcal{E}^{+}}\cap\left(\boldsymbol{u}-\overline{\mathcal{E}^{+}}\right)\neq E\left(\mathcal{S}\right)$,
then by Lemma \ref{Lemma: EWE} 
\begin{equation}
E\left(\mathcal{S}'\right)\neq E\left(\mathcal{S}\right),
\end{equation}
and $\mathcal{S}'\neq\mathcal{S}$, which is the content of Theorem
\ref{thm: NU main}. 
\end{proof}

\section{Examples and applications\label{sec:Examples}}

In this section, we will consider examples of NU GPT systems to show
how their axiomatic formulation simplifies due to the Gleason-type
theorem they allow. We also highlight that a simple well-known non-quantum
model introduced by Spekkens \cite{spekkens2007evidence} does not
belong to the class of almost NU GPTs and, therefore, does not come
with a GTT.

\subsection{Simplified axioms for a rebit and other unrestricted GPTs\label{subsec:Unrestricted-GPTs}}

Unrestricted GPTs are a well-studied class of GPT. The \emph{rebit
}\cite{caves2001entanglement}, for example, is a GPT system with
a disc-shaped state space. The state space can be equivalently modeled
(see Section \ref{subsec:Transforming-GPTs}) by the subset of\emph{
real }density matrices of a qubit. Rebits are convenient low-dimensional
building blocks for a toy model of quantum theory, giving rise to
many characteristic features such as superposition, entanglement and
non-locality \cite{wootters2012entanglement,caves2001entanglement,janotta2011limits}.

Using our notation, the state space of a rebit is given by

\begin{equation}
\mathcal{S}_{R}=\conv\left\{ \boldsymbol{\omega}_{\theta}\right\} _{\theta\in[0,2\pi)},\label{eq: rebit state space}
\end{equation}
where 
\begin{equation}
\boldsymbol{\omega}_{\theta}=\begin{pmatrix}\cos\theta\\
\sin\theta\\
1
\end{pmatrix}.
\end{equation}
The convex hull of the zero effect $\boldsymbol{0}$, the unit effect
$\boldsymbol{u}$ and a continuous ring of effects 
\begin{equation}
\mathbf{e}_{\theta}=\frac{1}{2}\begin{pmatrix}\cos\theta\\
\sin\theta\\
1
\end{pmatrix},\quad\theta\in[0,2\pi)\,,\label{eq: ring of extremal effects}
\end{equation}
form the rebit effect space

\begin{equation}
\mathcal{E}_{R}=\conv\left\{ \boldsymbol{0},\boldsymbol{u},\mathbf{e}_{\theta}\right\} _{\theta\in[0,2\pi)},\label{eq: rebit effect space}
\end{equation}
illustrated in Figure \ref{fig:Rebit-state-and}. The rebit satisfies
the no-restriction hypothesis since the effect space $\mathcal{E}_{R}$
is as large as is permitted in the GPT framework (see Eq. \ref{eq: E map}).

\begin{figure}
\centering{}\subfloat[\label{fig:Rebit-state-and}]{\begin{centering}
\includegraphics[scale=1.5]{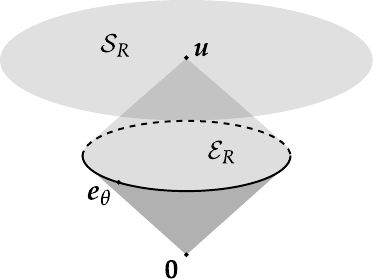}
\par\end{centering}
}\subfloat[\label{fig:squit-1}]{\begin{centering}
\includegraphics[scale=1.5]{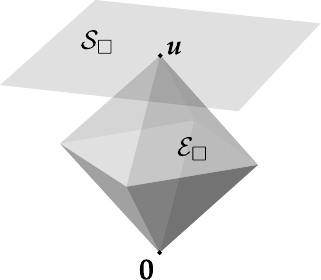}
\par\end{centering}
}\caption{State and effect spaces: (a) of the rebit GPT and (b) of the squit
GPT.}
\end{figure}

Let us now apply the general argument given at the end of Section
\ref{subsec:A-Gleason-type-theorem} to a rebit, as a first example
of an unrestricted GPT. Instead of using the GPT framework, we could
have described the rebit of this hypothetical world in an axiomatic
fashion, i.e. by assuming the axioms (O), (S) and (P) from Section
\ref{subsec:A-Gleason-type-theorem}, using the effect and state spaces
$\mathcal{E}_{R}$ and $\mathcal{S}_{R}$. Then, Theorem \ref{thm: NU main}
states that, alternatively, we could postulate the rebit observables
and, by considering the frame functions associated with them, recover
both the state space $\mathcal{S}_{R}$ and the probability rule.
More explicitly, we replace postulates (S) and (P) by a single postulate
with operational motivation. 
\begin{enumerate}
\item[(F)] The states of a rebit correspond exactly to the frame functions on
the effect space $\mathcal{E}_{R}$. 
\end{enumerate}
In other words, we effectively introduce the states of the rebit as
probability assignments on the outcomes of measurements. The model
created by the postulates (O) and (F) is equivalent to the the original
one in the sense that it makes exactly the same predictions.

Classical bits, qubits and qudits as well as\emph{ square bits }(or\emph{
squits}, for short) are other unrestricted GPTs for which identical
arguments also result in a smaller set of axioms by means of our Gleason-type
theorem. The state and effect spaces of bits and qudits were described
in Section \ref{subsec:Transforming-GPTs} while a \emph{squit} or
\emph{gbit} \cite{BarrettGPT} is a GPT with a square state space
and an octahedral effect space, as illustrated in Figure \ref{fig:squit-1}.
Pairs of squits are often considered in the study of non-local correlations
since they are capable of producing the super-quantum correlations
of a PR-box\emph{ }\cite{Popescu1994}.

\subsection{A GPT with a GTT: the noisy rebit \label{subsec: simplified noisy rebit axioms}}

Next, let us consider a \emph{noisy }rebit characterized by the property
that the extremal rebit observables $\mathbb{D}_{\mathbf{e}_{\theta}}=\left\llbracket \boldsymbol{e}_{\theta},\boldsymbol{u}-\boldsymbol{e}_{\theta}\right\rrbracket ,\theta\in[0,2\pi),$
can be measured only \emph{imperfectly}, i.e. with some efficiency
$p\in\left(0,1\right)$; see Eq. (\ref{eq: ring of extremal effects})
for the definition of the effects $\mathbf{e}_{\theta}$\footnote{The noisy rebit GPT also results from applying the general (shifted)
depolarizing channel for a generic state to the rebit effect space;
see \cite{Filippov2019} for details of the analogous qubit case.}. The state space of this NU GPT system coincides with that of the
rebit, $\mathcal{S}_{R}^{n}\equiv\mathcal{S}_{R}$. In order to define
its effect space , let us introduce two continuous rings of effects,
\begin{equation}
\mathbf{e}_{\theta}^{+}=\frac{p}{2}\begin{pmatrix}\cos\theta\\
\sin\theta\\
1
\end{pmatrix}\quad\text{and}\quad\mathbf{e}_{\theta}^{-}=\frac{p}{2}\begin{pmatrix}\cos\theta\\
\sin\theta\\
2/p-1
\end{pmatrix},\quad\theta\in[0,2\pi)\,.
\end{equation}
These rings, along with the zero effect $\boldsymbol{0}$ and the
unit effect $\bu$, form the extremal points of the noisy rebit effect
space, 
\begin{equation}
\mathcal{E}_{R}^{n}=\conv\left\{ \boldsymbol{0},\boldsymbol{u},\mathbf{e}_{\theta}^{+},\mathbf{e}_{\theta}^{-}\right\} _{\theta\in[0,2\pi)},\label{eq: noisy rebit effect space}
\end{equation}
depicted in Figure \ref{fig:Noisy-rebit-effect}. While still being
a GPT system, the model does not satisfy the no-restriction hypothesis:
the effect space $\mathcal{E}_{R}^{n}$ is \emph{restricted }to a
proper subset of $\mathcal{E}_{R}$ shown in Figure \ref{fig:Rebit-state-and}.
Nevertheless, Theorem \ref{thm: NU main} continues to apply: the
noisy rebit admits a GTT which is effectively due to the fact that
there exist finite neighbourhoods of the zero effect and the unit
effect in which $\mathcal{E}_{R}^{n}$ and $\mathcal{E}_{R}$ coincide.
Thus, the noisy rebit does not satisfy the no-restriction hypothesis,
but it does satisfy the dual assumption of the no-\emph{state}-restriction
hypothesis (see Appendix \ref{sec:Deriving-the-GPT}).

Repeating the argument presented in Section \ref{subsec:Unrestricted-GPTs},
we are able to simplify the definition of the noisy rebit in terms
of postulates (O), (S), and (P) which introduce its effect space $\mathcal{E}_{R}^{n}$,
its state space $\mathcal{S}_{R}^{n}$, and the Born rule, respectively.
The alternative axiomatic formulation in terms of only two postulates
only rests on the effect space of the system, 
\begin{enumerate}
\item[(O)] The \emph{observables} of a noisy rebit correspond exactly to the
tuples of vectors $\left\llbracket \boldsymbol{e}_{1},\boldsymbol{e}_{2},\ldots\right\rrbracket $
in $\mathcal{E}_{R}^{n}$ that sum to the vector $\boldsymbol{u}$,
with each vector corresponding to a possible disjoint outcome of measuring
the observable. 
\item[(F)] The \emph{states} of a noisy rebit correspond exactly to frame functions
on the effect space $\mathcal{E}_{R}^{n}$. 
\end{enumerate}
\emph{Mutatis mutandis}, this procedure applies to any other aNU GPT.

\begin{figure}
\centering{}\includegraphics[scale=1.5]{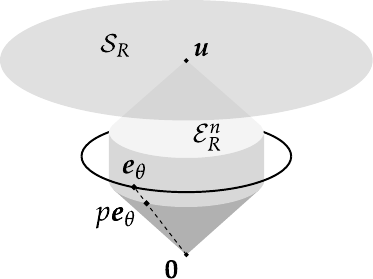}\caption{\label{fig:Noisy-rebit-effect}The effect space ${\cal E}_{R}^{n}$
and the state space $\mathcal{S}_{R}$ of the noisy rebit, an example
of a low-dimensional NU GPT; for comparison, the ring of extremal
rebit effects $\E_{\theta}$ is shown as a thick line.}
\end{figure}

\subsection{A GPT without a GTT: the Spekkens toy model}

In 2007, a toy theory was introduced \cite{spekkens2007evidence}
capable of reproducing a number of important quantum features such
as the existence of non-commuting observables, the impossibility of
cloning arbitrary states and the presence of entanglement while simultaneously
admitting a description in terms of local hidden variables. Originally,
Spekkens' model had been introduced without reference to the GPT framework.
Here, we will consider the ``convexification'' of this model such that it becomes a GPT system, as described
in \cite{janottanorestriction}.

Considered as a GPT system, Spekkens' model comes with a restricted
effect space, and it is not part of an aNU GPT which can be seen as
follows. Its state space is given by a regular octahedron 
\begin{equation}
\mathcal{S}_{S}=\conv\left\{ \boldsymbol{x}_{\pm},\boldsymbol{y}_{\pm},\boldsymbol{z}_{\pm}\right\} ,\label{eq: Spekkens state space}
\end{equation}
with vertices 
\begin{equation}
\boldsymbol{x}_{\pm}=\begin{pmatrix}\pm1\\
0\\
0\\
1
\end{pmatrix},\quad\boldsymbol{y}_{\pm}=\begin{pmatrix}0\\
\pm1\\
0\\
1
\end{pmatrix},\quad\boldsymbol{z}_{\pm}=\begin{pmatrix}0\\
0\\
\pm1\\
1
\end{pmatrix}.\label{eq:spekkens states}
\end{equation}
Under the no-restriction hypothesis the extremal effects (other than
the zero and unit effects) associated with the space $\mathcal{S}_{S}$
would be the vertices of a cube. In Spekkens' model, however, they
are taken to be the vertices of another octahedron inscribed into
this cube, as depicted in Figure \ref{fig:Spekkens}. More explicitly,
the effect space is the convex hull of the zero and unit effects and
the six extremal effects given by the (rescaled) vectors in Eq. (\ref{eq:spekkens states}),
\begin{equation}
\mathcal{E}_{S}=\conv\left\{ \boldsymbol{0},\boldsymbol{u},\frac{\boldsymbol{x}_{\pm}}{2},\frac{\boldsymbol{y}_{\pm}}{2},\frac{\boldsymbol{z}_{\pm}}{2}\right\} .\label{eq: Spekkens effect space}
\end{equation}

Not being an aNU GPT system, Theorem \ref{thm: NU main} tells us
that the toy model does not admit a GTT.\footnote{This fact was, of course, clear without Theorem \ref{thm: NU main}
since every vector in $W\left(\mathcal{E}_{S}\right)\supset\mathcal{S}_{S}$
yields frame function by definition.} It is impossible to reproduce this GPT system by assuming that the
states of the system are in one-to-one correspondence with the frame
functions on the effect space. There are, in fact, \emph{more} frame
functions than states in $\mathcal{S}_{S}$. The frame functions correspond
to all vectors in the set $W\left(\mathcal{E}_{S}\right)$ which is
a strict superset of $\mathcal{S}_{S}$ forming a cube around $\mathcal{S_{S}}$,
in the same way that $E\left(\mathcal{S}_{S}\right)$ encloses $\mathcal{E}_{S}$
in Figure \ref{fig:Spekkens}.

In order to recover the original model, one would have to place a
restriction on which frame functions correspond to allowed states.
This restriction can be considered analogous to relaxing the no-restriction
hypothesis on the effect space.

\begin{figure}
\begin{centering}
\includegraphics[scale=2]{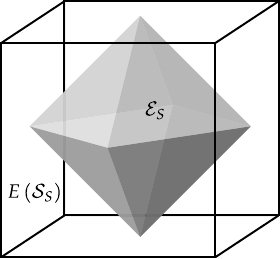}
\par\end{centering}
\caption{The octahedral effect space $\mathcal{E}_{S}$ of Spekkens' toy theory
projected into the hyperplane of $\R^{4}$ obtained by fixing the
fourth entry of the vectors to $1/2$; $\mathcal{E}_{S}$ is a proper
subset of the cubic effect space $E\left(\mathcal{S}_{S}\right)$
(boundary depicted by black lines) required by the no-restriction
hypothesis given the state space $\mathcal{S}_{S}$ defined in Eq.
(\ref{eq: Spekkens state space}).\label{fig:Spekkens}}
\end{figure}

\section{A GTT for almost NU GPTs based on two-outcome observables \label{sec:Simulability}}

The definition of a frame function used in Section \ref{sec:Main-Result}
is based on the idea that \emph{every} sequence of effects $\E_{1},\E_{2},\ldots\in\mathcal{E}$
satisfying Eqs. (\ref{eq:unit}) and (\ref{eq:coarse}) corresponds
to an observable. This assumption is, however, not operationally motivated
and is therefore not part of the most general GPT framework. In this
section we show that relaxing this assumption does not pose an obstacle
to the existence of Gleason-type theorems. In quantum theory a Gleason-type
theorem can already be derived by involving only a specific subset
of \emph{all} POMs \cite{Wright2018} known as projective-simulable
observables \cite{Oszmaniec2017}. We will show now that a similar
weakening of the assumptions continues to imply the result of Theorem
\ref{thm: NU main} in the context of GPTs. In this way, we are able to
extend our result to the most general type of GPTs.

Let us begin by introducing the idea of \emph{simulating} the measurement
of an observable by measuring other observables. This simulation is
achieved in a GPT by classically mixing observables and post-processing
measurement outcomes \cite{Heinosaarisimulable}. For example, to
simulate the observable 
\begin{equation}
\mathbb{G}=\left\llbracket \frac{1}{3}\left(\E_{1}+\E_{2}+2\boldsymbol{f}\right),\boldsymbol{u}-\frac{1}{3}\left(\E_{1}+\E_{2}+2\boldsymbol{f}\right)\right\rrbracket .
\end{equation}
we may measure the observables 
\begin{equation}
\mathbb{E}=\left\llbracket \E_{1},\E_{2},\boldsymbol{u}-\E_{1}-\E_{2}\right\rrbracket \quad\text{and}\quad\mathbb{F}=\left\llbracket \boldsymbol{f},\boldsymbol{0},\boldsymbol{u}-\boldsymbol{f}\right\rrbracket 
\end{equation}
with probabilities $1/3$ and and $2/3$, respectively, to simulate
\begin{equation}
\mathbb{G}^{\prime}=\frac{1}{3}\mathbb{E}+\frac{2}{3}\mathbb{F}=\left\llbracket \frac{1}{3}\left(\E_{1}+2\boldsymbol{f}\right),\frac{1}{3}\E_{2},\boldsymbol{u}-\frac{1}{3}\left(\E_{1}+\E_{2}+2\boldsymbol{f}\right)\right\rrbracket \,
\end{equation}
followed by coarse-graining the first two outcomes to produce the
dichotomic observable $\mathbb{G}$, where the addition and scalar
multiplication of observables is performed elementwise. The only post-processing
necessary in the proof to follow is to add outcomes to an observable
that occur with probability zero. For example, the two-outcome observable
$\left\llbracket \E,\boldsymbol{u}-\E\right\rrbracket $ simulates
the three-outcome observable $\left\llbracket \E,\boldsymbol{u}-\E,\boldsymbol{0}\right\rrbracket $
if one considers there to be a third outcome of measuring $\left\llbracket \E,\boldsymbol{u}-\E\right\rrbracket $
which never occurs.

Now consider the set of observables which may be simulated by two-outcome
extremal observables, i.e. those described by an extremal effect $\E$
and its complement $\boldsymbol{u}-\E$. For brevity we will refer
to such observables as \emph{simulable}. We now define simulable observables,
denoting by $\mathbb{O}(j)$ the $j$-th outcome of an observable
$\mathbb{O}$. A dichotomic observable has precisely two \emph{non-zero
}effects and, possibly, any number of copies of the zero effect.
\begin{defn}
A simulable $n$-outcome observable $\mathbb{O}$ satisfies 
\[
\mathbb{O}(k)=\sum_{j=1}^{N}q(k|j)\mathbb{M}(j)
\]
 for some probability distribution $q(k|j)$ such that $\sum_{k=1}^{N}q(k|j)=1$
for $1\le j\le m$ and some classical mixture $\mathbb{M}$ of dichotomic
extremal observables.
\end{defn}
The set of simulable observables is minimal in the sense that it is
contained in any set of observables, under the operational assumption
that a set of observables be simulation-closed \cite{Filippov2019}.
To see this we first note that each extremal effect must be included
in some observable, otherwise it would be excluded from the effect
space. Any observable containing a given extremal effect $\E$ can
be coarse-grained to give the two-outcome extremal observable $\left\llbracket \E,\boldsymbol{u}-\E\right\rrbracket $.
Thus, in order to be closed under simulation a set of observables
must at least contain all those that can be simulated by two-outcome
extremal observables.

Next, let us call\emph{ }a frame function \emph{simulable }if the
property (V2) in Definition \ref{A-generalized-probability} is required
to hold for simulable observables only.
\begin{defn}
\label{simulable frame function}A \emph{simulable frame function}
on an effect space $\mathcal{E}$ is a map $v:\R^{d+1}\rightarrow\R$
satisfying 
\begin{enumerate}
\item[(S1)] $0\leq v\left(\boldsymbol{e}\right)\leq1$ for all effects $\E\in\mathcal{E}$; 
\item[(S2)] $v\left(\E_{1}\right)+v\left(\E_{2}\right)+\ldots+v\left(\E_{n}\right)=1$
for all sequences of effects $\E_{1},\E_{2},\ldots,\E_{n}\in\mathcal{E}$
which give rise to simulable observables $\mathbb{O}=\left\llbracket \E_{1},\E_{2},\ldots,\E_{n}\right\rrbracket $. 
\end{enumerate}
Theorem \ref{thm: NU main} can now be strengthened because the properties
of simulable frame functions are sufficient for a proof. 
\end{defn}
\begin{thm}
\label{cor:simGPT}Let $\mathcal{S}$ and $\mathcal{E}$ be the state
and effect spaces, respectively, of a NU GPT. Any simulable frame
function $v$ on $\mathcal{E}$ admits an expression 
\begin{equation}
v\left(\E\right)=\E\boldsymbol{\cdot\omega},\label{eq:gleason sim}
\end{equation}
for some $\boldsymbol{\omega}\in\mathcal{S}$ and all $\E\in\mathcal{E}$. 
\end{thm}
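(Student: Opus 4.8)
The plan is to follow the two-step structure of the proof of Theorem~\ref{thm: NU main}. First I would establish the \emph{simulable} analogue of Lemma~\ref{thm:Main}: every simulable frame function $v$ on $\mathcal{E}$ has the form $v(\E)=\E\cdot\bo$ for some $\bo\in W(\mathcal{E})$. Granting this, the theorem follows exactly as before: a NU GPT is in particular an aNU GPT, so the condition \eqref{eq:NU}, i.e.\ $\overline{\mathcal{E}^{+}}\cap(\bu-\overline{\mathcal{E}^{+}})=E(\mathcal{S})$, holds, whence $W(\mathcal{E})=\mathcal{S}$ by precisely the argument in the proof of Theorem~\ref{thm: NU main} (via Lemmata~\ref{Lemma WES} and~\ref{Lemma: EWE}), and therefore $\bo\in\mathcal{S}$. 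So the only new content lies in re-deriving the conclusion of Lemma~\ref{thm:Main} from the weaker hypothesis (S2).

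For that step I would re-run Busch's argument, checking at each point that only \emph{simulable} observables are invoked. Three facts do the bookkeeping. (a) The one-outcome observable $\llbracket\bu\rrbracket$ is simulable --- coarse-grain (identify both outcomes of) any dichotomic extremal observable --- and adjoining a zero outcome makes $\llbracket\bu,\bO\rrbracket$ simulable; so (S2) gives $v(\bu)=1$ and $v(\bO)=0$. (b) For \emph{any} effect $\E$ the dichotomic observable $\llbracket\E,\bu-\E\rrbracket$ is simulable: writing $\E=\sum_i\lambda_i\E_i$ as a finite convex combination of extremal effects --- available because $\mathcal{E}$ is compact (Minkowski--Carath\'{e}odory) --- and classically mixing the dichotomic extremal observables $\llbracket\E_i,\bu-\E_i\rrbracket$ with weights $\lambda_i$ reproduces it. (c) For any $\E,\F\in\mathcal{E}$ the three-outcome observable $\llbracket\tfrac12\E,\tfrac12\F,\bu-\tfrac12\E-\tfrac12\F\rrbracket$ is simulable: toss a fair coin, on heads measure a dichotomic-extremal realisation of $\llbracket\E,\bu-\E\rrbracket$ and relabel its two (classes of) outcomes as global outcomes $1$ and $3$, on tails measure one of $\llbracket\F,\bu-\F\rrbracket$ and relabel as global outcomes $2$ and $3$; the whole procedure is a classical mixture of dichotomic extremal observables followed by post-processing, and its effects $\tfrac12\E=\tfrac12\E+\tfrac12\bO$, $\tfrac12\F$, $\tfrac12(\bu-\E)+\tfrac12(\bu-\F)$ all lie in $\mathcal{E}$.

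Applying (S2) to the observables in (b) and (c) and subtracting, $v(\tfrac12(\E+\F))+v(\bu-\tfrac12(\E+\F))=1$ and $v(\tfrac12\E)+v(\tfrac12\F)+v(\bu-\tfrac12\E-\tfrac12\F)=1$ yield $v(\tfrac12(\E+\F))=v(\tfrac12\E)+v(\tfrac12\F)$; the case $\F=\E$ then gives $v(\tfrac12\E)=\tfrac12 v(\E)$, so $v(\tfrac{\E+\F}{2})=\tfrac12 v(\E)+\tfrac12 v(\F)$ for all $\E,\F\in\mathcal{E}$, i.e.\ $v$ restricted to $\mathcal{E}$ is midpoint-affine. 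Since $v$ is bounded on $\mathcal{E}$ by (S1) and $\mathcal{E}$ is a convex body, $v|_{\mathcal{E}}$ is affine (by the standard fact that a bounded midpoint-affine function on a convex body is affine); as $\mathcal{E}$ contains $\bO$ and spans $\R^{d+1}$, it affinely spans $\R^{d+1}$, so $v|_{\mathcal{E}}$ extends uniquely to an affine functional $\bx\mapsto\bx\cdot\bo+c$, and $v(\bO)=0$ forces $c=0$. Thus $v(\E)=\E\cdot\bo$ for all $\E\in\mathcal{E}$; then $v(\bu)=1$ gives $\bo\cdot\bu=1$ and $v(\E)\geq0$ on $\mathcal{E}$ gives $\bo\in\mathcal{E}^{*}$, so $\bo\in\mathcal{E}^{*}\cap\{\bx:\bu\cdot\bx=1\}=W(\mathcal{E})$, completing the simulable analogue of Lemma~\ref{thm:Main}.

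The main obstacle is exactly the passage from (V2) to (S2). In Busch's proof full additivity is extracted from the three-outcome observable $\llbracket\E_1,\E_2,\bu-\E_1-\E_2\rrbracket$, which in a GPT need not be simulable, so that argument cannot be transcribed directly. The resolution is that the coin-tossing construction in (c) still delivers a \emph{midpoint} (``noisy'') additivity relation, and midpoint-additivity together with the boundedness already built into (S1) suffices to recover full affinity. A secondary, more routine point to handle carefully is the availability of finite decompositions of effects into extremal effects, which is where compactness of $\mathcal{E}$ (valid for NU GPTs) is used.
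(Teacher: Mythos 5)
Your proposal is correct, and its first half coincides with the paper's own argument: the simulation of $\llbracket\E,\bu-\E\rrbracket$ by classically mixing dichotomic extremal observables over a finite extremal decomposition of $\E$, and the coin-toss simulation of $\llbracket\tfrac12\E,\tfrac12\F,\bu-\tfrac12(\E+\F)\rrbracket$, are exactly the observables $\mathbb{D}_{\E}$ and $\mathbb{T}_{\E,\F}$ used in Appendix \ref{sec:Proof-of-Corollary}, and your subtraction yielding $v\left(\tfrac12(\E+\F)\right)=v\left(\tfrac12\E\right)+v\left(\tfrac12\F\right)$ together with the halving identity is the same computation. Where you genuinely diverge is in how linearity is then reached. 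The paper upgrades these identities to full additivity $v(\E)+v(\F)=v(\E+\F)$ whenever $\E+\F\in\mathcal{E}$, concludes by induction that a simulable frame function is a frame function in the sense of Definition \ref{A-generalized-probability}, and then simply invokes Theorem \ref{thm: NU main}, i.e.\ the linear-extension machinery of Lemma \ref{thm:Main} (rational homogeneity, order preservation, extension to $\mathcal{E}^{+}$ and then to $\R^{d+1}$). You instead stop at midpoint-affinity of $v$ on $\mathcal{E}$ and appeal to the Bernstein--Doetsch/Sierpi\'{n}ski-type fact that a bounded midpoint-affine function on a convex body is affine, extend affinely using that $\mathcal{E}$ affinely spans $\R^{d+1}$, and normalise via $v(\bO)=0$, $v(\bu)=1$ to land in $W\left(\mathcal{E}\right)$; the final identification $W\left(\mathcal{E}\right)=\mathcal{S}$ for NU (hence aNU) GPTs via Lemmata \ref{Lemma WES} and \ref{Lemma: EWE} is the same as in the paper. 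Your route is shorter, avoids the induction and the side condition $\E+\F\in\mathcal{E}$, but imports an external regularity theorem and leaves one small step tacit: the standard statement of that theorem concerns open convex domains, so affinity at boundary points of $\mathcal{E}$ needs the one-line midpoint argument linking a boundary point to an interior point (boundedness on $\mathcal{E}$, which has non-empty interior since it is convex, contains $\bO$ and spans $\R^{d+1}$, does the rest). The paper's route is self-contained, reusing Lemma \ref{thm:Main}, and yields the slightly stronger intermediate statement that simulable frame functions are automatically frame functions. Both proofs share the same implicit reliance on every effect being a finite convex combination of extremal effects, which you at least flag explicitly.
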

\begin{proof}
See Appendix~\ref{sec:Proof-of-Corollary}. 
\end{proof}
This theorem can be used to provide an alternative step in the operational
derivation of the GPT framework as described in Section \ref{subsec:A-Gleason-type-theorem},
with full details given in Appendix~\ref{sec:Deriving-the-GPT}.

\section{Summary and Discussion \label{sec:Summary-and-Discussion}}

From a conceptual point of view, the results of this paper imply that
each general probabilistic theory belongs to one of two distinct classes:
either it admits, like quantum theory, a Gleason-type theorem which
allows us to construct the set of the possible states of the theory,
or it does not admit a GTT.

In Lemma \ref{thm:Main} (see Section \ref{sec:Main-Result}) frame
functions were found to be linear functionals on the effect space.
If one considers this fact to be the main content of the Gleason-type
theorems in quantum theory then the lemma proves that Gleason-type
theorems exist for \emph{all} GPTs. In this paper we have, however,
taken the view that a Gleason-type theorem should establish a bijection
between frame functions and states in the theory under consideration.

Interpreting GTTs in this way, Theorem \ref{thm: NU main} shows that
a GPT admits such a theorem if and only if it is an \emph{almost noisy
unrestricted} GPT, of which classical and quantum models are examples.
Requiring that there is a state in a theory for every frame function
could be considered as dual to the no-restriction hypothesis which
demands that to every mathematical effect there should correspond
a measurement outcome. However, we have shown that the no-restriction
hypothesis is more restrictive than requiring the existence of a GTT. On the one hand, every unrestricted GPT admits a GTT but, on the other, there are almost NU GPTs that admit a GTT but violate the no-restriction
hypothesis. Phrased differently, the no-effect-restriction hypothesis
is manifestly different from the no-state-restriction hypothesis.

In Section \ref{subsec:A-Gleason-type-theorem} we describe how a
Gleason-type theorem can be used to derive the state space in a given
GPT from the set of observables. The postulates (O), (S) and (P),
which specify a given GPT, can be replaced by just two postulates,
namely (O) and (F), when the description of states as frame functions
is assumed. This reduction is only possible in almost NU GPTs.

Extensions of Gleason's theorem to beyond quantum theory have been
considered previously in the literature. Gudder et al. \cite{gudder1999convex} consider
states on convex effect algebras. These algebras can be represented
\cite{gudder1998representation} by subsets $K^{+}\cap(u-K^{+})$
of a real linear space $V$ in which $K$ is a positive cone and $u$
is an element of $K^{+}$. This representation coincides with unrestricted
effect spaces in our terminology. Morphisms (which coincide with frame
functions in our terminology) were shown in \cite{gudder1999convex}
to extend to positive linear functionals on $V$. Barnum \cite{barnum2003quantum}
pointed out that this result can be considered as a Gleason-type theorem
demonstrating, in our terminology, the existence of a Gleason-type
theorem for unrestricted GPTs. Theorem \ref{thm: NU main} extends
this result to exactly the class of almost NU GPTs.

In recent work \cite{Masanes2019,Galley2017,Galley2018}, alternatives
to simplifying the postulates of quantum theory have been put forward
by assuming, for example, the postulates of pure states and their
dynamics in combination with operational reasoning. It would be interesting
to study whether similar approaches also hold for other GPTs, or whether
they are unique to quantum theory.

The current work relies heavily on the convex structure of GPTs. In
future work we would like to establish which GPTs admit an analog
of Gleason's original theorem, in the sense that the frame functions
would only be defined on\emph{ extremal} effects where convexity arguments
can no longer be made.

Finally, it might be possible to establish a link between Gleason-type
theorems and the set of \emph{almost-quantum }correlations \cite{navascues2015almost}.
It is known that GPTs satisfying the no-restriction hypothesis cannot
produce the set of almost quantum\emph{ }correlations in Bell scenarios
\cite{Sainz2018}. If this result could be extended to almost NU GPTs
then the existence of a GTT for a GPT would also preclude the possibility
of that GPT producing the set of almost quantum correlations.
\begin{acknowledgement*}
We are grateful to an anonymous referee for pointing out a mistake
in a draft of our manuscript and for suggesting a way to correct it
which triggered the extension of the result from NU GPTs to aNU GPTs.
VJW acknowledges support by the Foundation for Polish Science (IRAP
project, ICTQT, contract no. 2018/MAB/5, co-financed by EU within
Smart Growth Operational Programme) and the Government of Spain (FIS2020-TRANQI and Severo Ochoa CEX2019-000910-S), Fundaci\'o Cellex, Fundaci\'o Mir-Puig, Generalitat de Catalunya (CERCA, AGAUR SGR 1381 and QuantumCAT).
\end{acknowledgement*}
\bibliographystyle{unsrturl}
\bibliography{gleasonbib}

\onecolumn\newpage
\appendix
%dummy comment inserted by tex2lyx to ensure that this paragraph is not empty

\section{Proofs of Lemmata \ref{Lemma: dual-cone}, \ref{dualconedualofcone}
, \ref{lem:Splusclosed} and \ref{lem:An-almost-NU}\label{conelemmata}}

\dualcone*
\begin{proof}
Firstly, let $\ba\in\overline{A^{+}}$. Then there exists a sequence
$(\ba_{n})_{n\in\mathbb{N}}$ such that $\ba_{n}\to\ba$ as $n\to\infty$
with $\ba_{n}\in A^{+}$. For any $\boldsymbol{b}\in A^{*}$ we have
$\ba_{n}\cdot\boldsymbol{b}\geq0$ and, since the inner product is
continuous, $\ba\cdot\boldsymbol{b}\geq0$. Therefore we have shown
$\ba\in A^{**}$.

Secondly, consider $\bx\notin\overline{A^{+}}$. By the hyperplane
separation theorem there exists $\bh\in\R^{d+1}$ and real numbers
$c_{1}>c_{2}$ such that $\ba\cdot\bh\geq c_{1}$ for all $\ba\in\overline{A^{+}}$
and $\bx\cdot\bh\leq c_{2}$. For all $\ba\in\overline{A^{+}}$ and
$\lambda>0$ we have $\lambda\ba\cdot\bh\geq c_{1}$ and therefore
$\ba\cdot\bh\geq c_{1}/\lambda$. Taking the limit as $\lambda\to\text{\ensuremath{\infty}}$
we find $\ba\cdot\bh\geq0$ for all $\ba\in\overline{A^{+}}$ and
hence $\bh\in A^{*}$. Finally, since $\bO\in\overline{A^{+}}$ we
find $0=\bO\cdot\bh\geq c_{1}>c_{2}$, and thus $\bx\cdot\bh\leq c_{2}<0$,
meaning $\bx\notin A^{**}$ since we found that $\bh\in A^{*}$.
\end{proof}
\dualconedualofcone*
\begin{proof}
By Definition \ref{Lemma: dual-cone}, a vector $\boldsymbol{b}$
is in the dual cone $A^{*}$ of $A$ if and only if $\boldsymbol{b}\cdot\boldsymbol{a}\geq0$
for all $\boldsymbol{a}\in A$. Equivalently, we may require $x\left(\boldsymbol{b}\cdot\boldsymbol{a}\right)=\boldsymbol{b}\cdot\left(x\boldsymbol{a}\right)\geq0$
for all vectors $\boldsymbol{a}$ in the set $A$ and $x\geq0,$ which
holds if and only if $\boldsymbol{b}\in A^{+}$. 
\end{proof}
\Splusclosed*
\begin{proof}
Firstly, note that by definition the set $\cS$ does not contain the
zero vector, $\boldsymbol{0}$. Let $(\bx_{j})_{j\in\mathbb{\mathbb{N}}}$
be a convergent sequence in $\R^{d+1}$ such that $\bx_{j}\in\cS^{+}$
for all $j\in\mathbb{N}$ and $\bx_{j}\to\bx$ as $j\to\infty$. We
will show that $\bx\in\cS^{+}$. 

If $\bx=0$, the statement holds. Now assume $\bx\neq0$. For each
$j\in\mathbb{N}$ there exists $p_{j}>0$ and $\bo_{j}\in\cS$ such
that $\bx_{j}=p_{j}\bo_{j}$. By the Bolzano-Weierstrass theorem and
the compactness of $\cS$, the sequence $(\bo_{j})_{j\in\mathbb{N}}$
has a convergent subsequence $(\bo_{k(j)})_{j\in\mathbb{N}}$ such
that $\bo_{k(j)}\to\bo\in\cS$ as $j\to\infty$. 

Now we show that the corresponding subsequence $(p_{k(j)})_{j\in\mathbb{N}}$
neither diverges to infinity nor tends to zero. Firstly, assume that
$p_{k(j)}\to\infty$ as $j\to\infty$. Then, since $\bx_{k(j)}\to\bx$,
we find the contradiction $\bo_{k(j)}=\bx_{k(j)}/p_{k(j)}\to\bO\notin\cS$.
Secondly, assume that $p_{k(j)}\to0$ as $j\to\infty$. Then $\bx_{k(j)}=p_{k(j)}\bo_{k(j)}\to\bO$,
contradicting our assumption that $\bx\neq\bO$. 

Therefore, since $(\bx_{k(j)})_{j\in\mathbb{\mathbb{N}}}$ converges,
we find there exists a finite number $p>0$ such that $p_{k(j)}\to p$
as $j\to\infty$. This result means that in the limit of $j\to\infty$,
$\bx_{j}$ will converge to an element of $\cS^{+}$: $\bx_{j}=p_{j}\bo_{j}\to p\bo\in\cS^{+}$
as $j\to\infty$.
\end{proof}
\aNU*
\begin{proof}
Consider a GPT system with state space $\mathcal{S}$ and effect space
$\mathcal{E}$. Assume that the spaces satisfy $E\left(\mathcal{S}\right)=\overline{\mathcal{E}^{+}}\cap\left(\boldsymbol{u}-\overline{\mathcal{E}^{+}}\right)$. 

Then if $\boldsymbol{e}\in E\left(\mathcal{S}\right)$, we have $\boldsymbol{e}\in\overline{\mathcal{E}^{+}}$
and for every $\epsilon>0$ there exists $\E_{\epsilon}\in\mathcal{E}^{+}$
such that $\left\Vert \E-\E_{\epsilon}\right\Vert <\epsilon$. Hence
there exists $p\in\left(0,1\right]$ such that $p\boldsymbol{e}_{\epsilon}\in\mathcal{E}$
which means that the GPT system satisfies Definition \ref{def:AlmostNU}.

Conversely, assume that the GPT satisfies Definition \ref{def:AlmostNU}.
We will show that $\overline{\mathcal{E}^{+}}=E(\cS)^{+}$, by firstly
showing $E(\cS)^{+}\subseteq\overline{\mathcal{E}^{+}}$ and secondly
that $\overline{\mathcal{E}^{+}}\subseteq E(\cS)^{+}$.

For every vector $b\E\in E(\mathcal{S})^{+}$ (where $\E\in E(\mathcal{S})$
and $b>0$) and $\epsilon>0$ there exists $a\E_{\epsilon}\in\mathcal{E}^{+}$
(where $\E_{\epsilon}\in\mathcal{E}$ and $a>0$) such that $\left\Vert b\E-ba\E_{\epsilon}\right\Vert <b\epsilon$.
Letting $\epsilon=\delta/b$, we find for every $\F\in E(\mathcal{S})^{+}$
and $\delta>0$ there exists $\F_{\epsilon}\in\mathcal{E}^{+}$ such
that $\left\Vert \F-\F_{\epsilon}\right\Vert <\epsilon$. Since every
point in $E(\cS)^{+}$ is a point of closure of $\mathcal{E}^{+},$
we find $E(\cS)^{+}\subseteq\overline{\mathcal{E}^{+}}$.

To prove that $\overline{\mathcal{E}^{+}}\subseteq E(\cS)^{+}$, we
begin by showing that $E(\cS)^{+}$ is a closed set. Firstly, we have
$E(\cS)^{+}=\left(\mathcal{S}^{*}\cap\left(\boldsymbol{u}-\mathcal{S}^{*}\right)\right)^{+}$
by Eq. (\ref{eq: E map}). Then we find 
\begin{equation}
\left(\mathcal{S}^{*}\cap\left(\boldsymbol{u}-\mathcal{S}^{*}\right)\right)^{+}=\cS^{*}\label{eq:ESplusSstar}
\end{equation}
as follows. The set on the left of this equation is clearly contained
in that on the right since we have $\left(\mathcal{S}^{*}\cap\left(\boldsymbol{u}-\mathcal{S}^{*}\right)\right)^{+}\subseteq\left(\mathcal{S}^{*}\right)^{+}=\mathcal{S}^{*}$.
Furthermore, if $\boldsymbol{e}\in\mathcal{S}^{*}$, then non-negative
rescalings of $\boldsymbol{e}$ are also contained in $\mathcal{S}^{*}$:
$x\boldsymbol{e}\in\mathcal{S}^{*}$ for all $x\geq0$. Since $\boldsymbol{u}\cdot\boldsymbol{\omega}=1$
for all $\boldsymbol{\omega}\in\mathcal{S}$, $\boldsymbol{u}$ is
an internal point of $\mathcal{S}^{*}$. Thus, there exists an open
ball $\mathfrak{B}\left(\boldsymbol{u},\epsilon\right)$ around $\boldsymbol{u}$
of radius $\epsilon$ in $\mathcal{S}^{*}$ for some $\epsilon>0$.
Therefore, for $x<\epsilon/\norm{\boldsymbol{e}}$ we have $\norm{\boldsymbol{u}-\left(\boldsymbol{u}-x\boldsymbol{e}\right)}<\epsilon$
and hence $\boldsymbol{u}-x\boldsymbol{e}\in\mathcal{S}^{*}$. By
definition, $x\boldsymbol{e}\in\left(\boldsymbol{u}-\mathcal{S}^{*}\right)$,
hence we have $x\boldsymbol{e}\in\mathcal{S}^{*}\cap\left(\boldsymbol{u}-\mathcal{S}^{*}\right)$
and $\boldsymbol{e}\in\left(\mathcal{S}^{*}\cap\left(\boldsymbol{u}-\mathcal{S}^{*}\right)\right)^{+}$,
thus verifying Eq. (\ref{eq:ESplusSstar}). The dual cone $\cS^{*}$
is, by definition, the intersection of a collection of closed halfspaces
therefore $\cS^{*}=E(\cS)^{+}$ is closed.

Now, since $E(\cS)^{+}$ is closed and contains $\mathcal{E}^{+}$,
we have $\overline{\mathcal{E}^{+}}\subseteq E(\cS)^{+}$. Finally,
we have shown $\overline{\mathcal{E}^{+}}=E(\cS)^{+}$ which gives
$\overline{\mathcal{E}^{+}}\cap\left(\boldsymbol{u}-\overline{\mathcal{E}^{+}}\right)=E\left(\mathcal{S}\right)^{+}\cap(\bu-E\left(\mathcal{S}\right)^{+})=E\left(\mathcal{S}\right)$.
\end{proof}

\section{An alternative simplification of the axioms defining a GPT\label{sec: alternative simplification of axions for a GPT}}

Let us briefly mention an alternative approach to simplifying the
postulates (S), (O) and (P) providing an equivalent definition a GPT
which closely follows the operational assumptions of the GPT framework.
It should be compared with the simplification using Theorem \ref{thm: NU main}
outlined at the end of Section \ref{subsec:A-Gleason-type-theorem}.
The starting point is a single postulate about the states of the model
at hand. 
\begin{enumerate}
\item[(S')] There exist $d$ fiducial measurement outcomes of observables whose
probabilities determine the state of the system. These states are
restricted to being represented by vectors in $\mathcal{S}$. 
\end{enumerate}
The first part of the postulate, the existence of $d$ fiducial measurement
outcomes, determines that the state space can be embedded in $\R^{d}$
and is convex, with convex combinations of vectors representing classical
mixtures of the corresponding states. However, this assumption does
not determine the ``shape'' of the state space, hence the inclusion
of the second part of the postulate restricting the state space to
$\mathcal{S}$. For a specific GPT, the second part of the postulate
may take a more natural-sounding form such as state vectors having
modulus less than or equal to one. From (S'), using the standard operational
assumption that effects must respect classical mixtures and the no-restriction
hypothesis (see Section \ref{sec:Generalized-probabilistic-theori}),
the postulates (O) and (P) are recovered easily.

Let us conclude by comparing this approach to our approach of using
Theorem \ref{thm: NU main} in order to reduce the postulates (O),
(S) and (P).

First, postulate (O) does not assume that there exists $d$ fiducial
outcomes. This property is a consequence in our approach once the
states are identified as linear functionals on the effect space. Therefore,
postulate (O) is not simply a stronger version of (S').

Second, in order to postulate the existence of $d$ fiducial measurement
outcomes, as is done in (S'), one assumes some knowledge of all the
observables of the system; otherwise one would not know that the two
outcomes in question form a complete fiducial set. Therefore, axiom
(S') makes assumptions about both the states and the observables of
the system whereas (O) only concerns observables.

Finally, in the approach based on (S'), additional assumptions would
be necessary to reconstruct an almost NU GPT which does not satisfy
the no-restriction hypothesis because one could not use the no-restriction
hypothesis to recover the postulates (O) and (P). However, such a
GPT does admit a GTT, as Theorem \ref{thm: NU main} shows, and hence
the first method would still be valid.

\section{The ``fiducial state'' derivation of the GPT framework \label{sec:Deriving-the-GPT}}

In the modern literature, the GPT framework is typically derived,
as in Section \ref{sec:Generalized-probabilistic-theori}, by assuming
the existence of fiducial measurement outcomes first, then defining
the state space of a system, followed by a full treatment of observables
and their measure\-ment, see for example \cite{Hardy1999,Masanes2011,BarrettGPT,short2010strong,Barnum2014,Sainz2018}.
However, one may equally consider the inverted argument, i.e. derive
the framework using equivalent operational assumptions, by assuming
a fiducial set of states, in order to define all possible measure\-ments
and their outcomes then finding the compatible mathematical description
of states. Note that such a dual approach is not novel, for example,
Ludwig describes the idea in his work on operational theories \cite{ludwighilbertspace}
exemplify and the test-space formalism \cite{foulis1981empirical,foulis1993logicoalgebraic}
begins with the structure of measurement outcomes first. Proceeding
in this second, dual manner the structure of effect spaces is established
first then Theorem \ref{cor:simGPT} presents an alternative method
for deriving the structure of state spaces, compared with the standard
argument involving mixtures of measure\-ment outcomes.

We begin by summarising the ``fiducial states'' derivation of the
GPT frame\-work in parallel with Section \ref{sec:Generalized-probabilistic-theori}.
Consider all the possible outcomes of the measure\-ments of all the
observables of a given system. We will assume that there exists a
finite set of \emph{fiducial states} such that any one of these outcomes,
$\zeta$, is uniquely determined by the probabilities of $\zeta$
being observed after a measurement (of which $\zeta$ is a possible
outcome) is performed on the system in each of the fiducial states.
In other words, for a system with $d$ states in its fiducial set,
an outcome may be identified by the vector $\E\in\R^{d}$ such that
\begin{equation}
\E=\begin{pmatrix}p_{1}\\
\vdots\\
p_{d}
\end{pmatrix},
\end{equation}
where $p_{j}$ is the probability of observing the outcome for a system
in the $j$th fiducial state. This representation of measurement outcomes
is derived from the operational assumption that one should be able
to distinguish two distinct measurement out\-comes by their statistics
on a finite number of states, in analogy to assuming the possibility
of distinguishing two distinct states from the probabilities of a
finite number of measurement outcomes in the ``fiducial measurements''
approach.

In line with GPT terminology we will call the set of vectors corresponding
to outcomes in a model the effect space and the vectors within this
set effects. Note that the effects are now simply vectors and not
linear functionals. For brevity, we will often refer to a measurement
outcome as the effect by which it is represented.

In the bit example from Section \ref{sec:Generalized-probabilistic-theori},
the fiducial set of states could be the ``0'' and ``1'' states.
Thus the effect space would be a subset of $\R^{2}$.

We will assume the existence of an outcome that occurs with probability
one for any state of the system. This outcome must be represented
by the effect 
\begin{equation}
\boldsymbol{u}=\begin{pmatrix}1\\
\vdots\\
1
\end{pmatrix}.
\end{equation}
Similarly, we assume the existence of an outcome that never occurs,
repres\-ented by the effect 
\begin{equation}
\boldsymbol{0}=\begin{pmatrix}0\\
\vdots\\
0
\end{pmatrix}.
\end{equation}
Any outcome $\E$ must have a complement, namely the outcome ``not
$\E$'' necessarily occurring with probability $(1-p_{j})$ when
the measurement of ``$\E$ or not $\E$'' is performed on the $j$th
fiducial state. Therefore, for any effect $\E=\left(p_{1},\ldots,p_{d}\right)^{T}$
the vector 
\begin{equation}
\boldsymbol{u}-\E=\begin{pmatrix}1-p_{1}\\
\vdots\\
1-p_{d}
\end{pmatrix}
\end{equation}
must also be in the effect space.

Consider two measurements on the system each with a discrete set of
possible outcomes and label the outcomes of each measurement with
positive integers such that the first measurement has outcomes $\left\{ \E_{1},\E_{2},\ldots\right\} $
and the second $\left\{ \E'_{1},\E'_{2},\ldots\right\} $ (if the
measurement has a finite number, $n$, of possible outcomes the labels
$j$ for $j>n$ are assigned the zero effect). If a classical mixture
of these measurements is performed then possible outcomes of this
procedure can be represented by convex combinations of effects. Specifically,
if the first measurement is performed with probability $p$ and the
second with probability $1-p$, then observing an outcome labeled
$j$ from this procedure must be represented by the vector $p\E_{j}+(1-p)\E'_{j}$
in order to be consistent with the fiducial state set. Therefore we
assume the effect space is convex. Finally, since an arbitrarily good
approximation of an effect would ope\-ra\-tio\-nally be indistinguishable
from the effect itself we assume the effect space is a closed subset
of $\R^{d}$.

Returning to the bit example, we can build our effect space from the
requirement of having a measurement that perfectly distinguishes ``0''
and ``1'', and must there\-fore have outcomes, $(1,0)^{T}$ and
$(0,1)^{T}$. Combined with the other requirements for an effect space
we find the bit effect space to be the square in Figure \ref{fig:effbit},
a trans\-formation of the bit effect space described in Section \ref{subsec:Effects-and-observables}.

\begin{figure}
\begin{centering}
\includegraphics[width=0.4\textwidth]{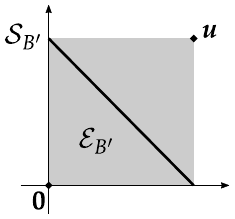}
\par\end{centering}
\caption{\label{fig:effbit}State and effect spaces $\mathcal{S}_{B'}$ (diagonal
black line) and $\mathcal{E}_{B'}$ (grey square), respectively, of
the classical-bit GPT when formulated in the ``measurement-first''
method.}
\end{figure}

We have arrived at the same requirements for the structure of an effect
space as were descri\-bed in Section \ref{sec:Generalized-probabilistic-theori}
(a convex, compact subset of a real vector space containing the zero
vector, and a vector $\bu$ such that $\bu-\E$ is in the set for
every $\E$ in the set). We may now consider how states should be
represented in the frame\-work. We assume a state will be represented
by a map $\omega$ from an outcome $\E$ to the probability of observing
$\E$ when a measurement (of which $\E$ is a possible outcome) is
performed on a system in state $\omega$. From here we may derive
the state space structure of the GPT framework using the standard
operational assumptions or the alternative presented by Theorem \ref{cor:simGPT}.

One the one hand, the standard method for deriving the structure of
the state space is to exploit the fact that we wish for outcome probabilities
to respect mixtures, in analogy with the reasoning behind (\ref{eq: affine condition}),
to find 
\begin{equation}
\omega\left(p\E+(1-p)\E'\right)=p\omega\left(\E\right)+(1-p)\omega\left(\E'\right),
\end{equation}
for $p\in\left[0,1\right]$ and all effects $\E,\E'$. Thus each map
$\omega$ admits an expression 
\begin{equation}
\omega(\E)=\E\cdot\bo,
\end{equation}
for all effects $\E$ and some $\bo\in W\left(\mathcal{E}\right)\in\R^{d}$.

One the other hand, we have already assumed that a pair $\left\{ \E,\boldsymbol{u}-\E\right\} $
form a measurement and have introduced the formalism for describing
mixtures of meas\-ure\-ments, therefore the simulable measurements
from Section \ref{sec:Simulability} are already inclu\-ded in the
framework. Theorem \ref{cor:simGPT} then tells us that if a state
$\omega$ is to assign probab\-ilities to the possible outcomes of
these measurements such that the probabilities of all the outcomes
sum to one then 
\begin{equation}
\omega(\E)=\E\cdot\bo,
\end{equation}
for all effects $\E$ and some $\bo\in W\left(\mathcal{E}\right)\in\R^{d}$.

Both of these approaches lead to the conclusion that the state space
of a GPT with effect space $\mathcal{E}$ must be a subset of $W\left(\mathcal{E}\right)$.
Although the conditions are mathem\-atically different there is no
clear conceptual advantage to either argument.

The ``fiducial states'' derivation of the framework highlights the
existence of a relative of the no-restriction hypothesis, which we
will call the \emph{no-state-re\-stric\-tion hypothesis}: the inclusion
of all $\bo\in\R^{d}$ satisfying $\E\cdot\bo$ and $\bu\cdot\bo=1$
in the state space. Note that this is not equivalent to the no-restriction
hypothesis in all cases, for example the noisy bit model in Figure
\ref{fig:NUbit} satisfies the no-state-restriction hypothesis but
not the no-restriction hypothesis.

Continuing the bit example, employing either the no-restriction or
no-state-restriction hypothesis leads to the state space $\mathcal{S}_{B'}$,
the convex hull of the points $(0,1)^{T}$ and $(1,0)^{T}$ pictured
in Figure \ref{fig:effbit}. The pair of state and effect spaces $\mathcal{S}_{B'}$
and $\mathcal{E}_{B'}$ are a transformation of the state and effect
spaces $\mathcal{S}_{B}$ and $\mathcal{E}_{B}$ in Figure \ref{fig:bit}. 

\section{Proofs of Lemmata \ref{thm:Main}, \ref{Lemma WES} and \ref{Lemma: EWE}\label{WESandEWE}}

\prop*
\begin{proof}
Consider a finite set of effects $\E_{1},\E_{2},\ldots,\E_{n}\in\mathcal{E}$
such that $\sum_{j\in J}\E_{j}\in\mathcal{E}$ for any subset $J\subseteq\left\{ 1\ldots,n\right\} $.
First we show that a frame function $v$ must be additive on any such
set, i.e. 
\begin{equation}
v\left(\E_{1}\right)+v\left(\E_{2}\right)+\ldots+v\left(\E_{n}\right)=v\left(\E_{1}+\E_{2}+\ldots\E_{n}\right).\label{eq:additivity}
\end{equation}
We have that the tuples 
\begin{equation}
\left\llbracket \E_{1},\ldots,\E_{n},\bu-\sum_{j=1}^{n}\E_{j}\right\rrbracket \text{, and }\left\llbracket \sum_{j=1}^{n}\E_{j},\bu-\sum_{j=1}^{n}\E_{j}\right\rrbracket ,
\end{equation}
are both observables in the GPT since they satisfy Eqs. (\ref{eq:unit})
and (\ref{eq:coarse}). Hence, by property (V2) of a frame function
we find 
\begin{equation}
\sum_{j=1}^{n}v\left(\E_{j}\right)+v\left(\boldsymbol{u}-\sum_{j=1}^{n}\E_{j}\right)=v\left(\sum_{j=1}^{n}\E_{j}\right)+v\left(\boldsymbol{u}-\sum_{j=1}^{n}\E_{j}\right)=1,
\end{equation}
and Eq. (\ref{eq:additivity}) follows.

The next step is to show the homogeneity of $v$ on $\mathcal{E}$,
i.e. 
\begin{equation}
\alpha v\left(\E\right)=v\left(\alpha\E\right)\quad\text{for all }\E\in\mathcal{\mathcal{E}}\quad\text{and }\alpha\in\left[0,1\right]\,.\label{eq:homogeneity}
\end{equation}
Note that the convexity of $\mathcal{E}$ ensures that rescaling an
effect $\E$ by a factor $\alpha\leq1$ produces another effect: $\alpha\E=\alpha\E+\left(1-\alpha\right)\boldsymbol{0}\in\mathcal{E}$.
For any integer number $n\in\mathbb{N}$, Eq. (\ref{eq:additivity})
implies 
\begin{equation}
v\left(\E\right)=v\left(\frac{n}{n}\E\right)=v\left(\frac{1}{n}\E+\ldots+\frac{1}{n}\E\right)\stackrel{\text{}}{=}nv\left(\frac{1}{n}\E\right)\:;\label{eq:natural}
\end{equation}
then, letting $m\in\mathbb{N}$ with $m\leq n$, Eqs. (\ref{eq:additivity})
and (\ref{eq:natural}) lead to the homogeneity of $v$ over the rationals,
\begin{equation}
v\left(\frac{m}{n}\E\right)=v\left(\frac{1}{n}\E+\ldots+\frac{1}{n}\E\right)\stackrel{\text{}}{=}mv\left(\frac{1}{n}\E\right)\stackrel{}{=}\frac{m}{n}v\left(\E\right).\label{eq:rational}
\end{equation}
Now consider two rational numbers $p,q\in\left[0,1\right]$ with $p\leq q$.
Using property (V1) of a frame function with argument $\left(q-p\right)\E\in\mathcal{E}$
guarantees that $v\left(\left(q-p\right)\E\right)\geq0$. Also we
find by property (V2) of a frame function that 
\begin{equation}
v\left(q\E\right)=v\left(q\E-p\E+p\E\right)\stackrel{\text{}}{=}v\left(\left(q-p\right)\E\right)+v\left(p\E\right)\,.
\end{equation}
Thus, the values of frame functions on multiples of a given effect
respect the ordering induced by the scale factors, 
\begin{equation}
v\left(p\E\right)\leq v\left(q\E\right)\,.\label{eq:orderpreserve}
\end{equation}
Next, let $p_{\mu}$ and $q_{\nu}$ be sequences of rational numbers
in the interval $\left[0,1\right]$ that tend to $\alpha$ from below
and above, respectively. Then we have 
\begin{equation}
p_{\mu}v\left(\E\right)\stackrel{}{=}v\left(p_{\mu}\E\right)\stackrel{}{\leq}v\left(\alpha\E\right)\stackrel{}{\leq}v\left(q_{\nu}\E\right)\stackrel{\text{ }}{=}q_{\nu}v\left(\E\right)\,,
\end{equation}
so that the homogeneity of $v$ claimed in Eq. (\ref{eq:homogeneity})
follows from taking the limit of both sequences.

Thirdly, we construct a well-defined extension of the frame function
$v$ to $\mathcal{E}^{+}$, the positive cone associated with $\mathcal{E}$
(see Definition \ref{def:The-convex-cone}) such that $v\left(\boldsymbol{a}+\boldsymbol{b}\right)=v\left(\boldsymbol{a}\right)+v\left(\boldsymbol{b}\right)$
holds for all $\boldsymbol{a},\boldsymbol{b}\in\mathcal{E}^{+}$.
To do so, consider two effects $\E_{1},\E_{2}\in\mathcal{E}$ which
give rise to the same vector in the positive cone via $\boldsymbol{a}=a_{1}\E_{1}=a_{2}\E_{2}\in\mathcal{E}^{+}$,
with $1<a_{1}<a_{2}$. Then we have 
\begin{equation}
v\left(\E_{2}\right)=v\left(\frac{a_{1}}{a_{2}}\E_{1}\right)\stackrel{\text{}}{=}\frac{a_{1}}{a_{2}}v\left(\E_{1}\right),
\end{equation}
hence $a_{2}v\left(\E_{2}\right)=a_{1}v\left(\E_{1}\right)$, and
we may uniquely define the frame function on arbitrary vectors in
the positive cone by 
\begin{equation}
v\left(\boldsymbol{a}\right):=a_{1}v\left(\E_{1}\right).\label{eq:extensiondef}
\end{equation}
Additivity of the extended frame function is easily seen to hold for
vectors in the positive cone: consider vectors $\boldsymbol{a}=a\E_{a}$
and $\boldsymbol{b}=b\E_{b}$ for $\boldsymbol{e}_{a},\boldsymbol{e}_{b}\in\mathcal{E}$
and $a,b>1$ and let $c=a+b$. Noting that $\left(\boldsymbol{a}+\boldsymbol{b}\right)/c\in\mathcal{E}$
is an effect, we obtain 
\begin{equation}
v\left(\boldsymbol{a}+\boldsymbol{b}\right)\stackrel{}{=}cv\left(\frac{1}{c}\left(\boldsymbol{a}+\boldsymbol{b}\right)\right)\stackrel{}{=}cv\left(\frac{1}{c}\boldsymbol{a}\right)+cv\left(\frac{1}{c}\boldsymbol{b}\right)=v\left(\boldsymbol{a}\right)+v\left(\boldsymbol{b}\right).\label{eq:additive}
\end{equation}

A linear extension of a frame function $v$ to the whole of $\R^{d+1}$
follows from the fact that any $\boldsymbol{c}\in\R^{d+1}$ outside
the positive cone $\mathcal{E}^{+}$ may be decomposed into $\boldsymbol{c}=\boldsymbol{a}-\boldsymbol{b}$
with $\boldsymbol{a},\boldsymbol{b}\in\mathcal{E}^{+}$ by Lemma \ref{Decomposition-1}.
If the decomposition is not unique, $\boldsymbol{c}=\boldsymbol{a}-\boldsymbol{b}=\boldsymbol{a}'-\boldsymbol{b}'$,
we have $\boldsymbol{a}+\boldsymbol{b}'=\boldsymbol{a}'+\boldsymbol{b}$
leading to 
\begin{equation}
v\left(\boldsymbol{a}+\boldsymbol{b}'\right)=v\left(\boldsymbol{a}'+\boldsymbol{b}\right).
\end{equation}
It then follows from Eq. (\ref{eq:additivity}), that $v\left(\boldsymbol{a}\right)+v\left(\boldsymbol{b}'\right)=v\left(\boldsymbol{a}'\right)+v\left(\boldsymbol{b}\right)$
and hence 
\begin{equation}
v\left(\boldsymbol{a}\right)-v\left(\boldsymbol{b}\right)=v\left(\boldsymbol{a}'\right)-v\left(\boldsymbol{b}'\right).
\end{equation}
Therefore we may uniquely define the value of the frame function on
the vector $\boldsymbol{c}$ via 
\begin{equation}
v\left(\boldsymbol{c}\right):=v\left(\boldsymbol{a}\right)-v\left(\boldsymbol{b}\right)\,,\label{eq:fullextension}
\end{equation}
i.e. independently of the decomposition of the vector $\boldsymbol{c}$.

This extension of any frame function $v$ on $\mathcal{E}$ to $\R^{d+1}$
must be linear. First we show additivity: let $\R^{d+1}\ni\boldsymbol{c}_{j}=\boldsymbol{a}_{j}-\boldsymbol{b}_{j}$
for $\boldsymbol{a}_{j},\boldsymbol{b}_{j}\in\mathcal{E}^{+}$, then
\begin{equation}
\begin{aligned}v\left(\boldsymbol{c}_{1}+\boldsymbol{c}_{2}\right) & =v\left(\boldsymbol{a}_{1}-\boldsymbol{b}_{1}+\boldsymbol{a}_{2}-\boldsymbol{b}_{2}\right)\\
 & =v\left(\boldsymbol{a}_{1}+\boldsymbol{a_{2}}-\left(\boldsymbol{b}_{1}+\boldsymbol{b}_{2}\right)\right)\\
 & \stackrel{}{=}v\left(\boldsymbol{a}_{1}+\boldsymbol{a}_{2}\right)-v\left(\boldsymbol{b}_{1}+\boldsymbol{b}_{2}\right)\\
 & \stackrel{}{=}v\left(\boldsymbol{a}_{1}\right)+v\left(\boldsymbol{a}_{2}\right)-v\left(\boldsymbol{b}_{1}\right)-v\left(\boldsymbol{b}_{2}\right)\\
 & \stackrel{}{=}v\left(\boldsymbol{c}_{1}\right)+v\left(\boldsymbol{c}_{2}\right).
\end{aligned}
\end{equation}
Then to show homogeneity let $\R^{d+1}\ni\boldsymbol{c}=\boldsymbol{a}-\boldsymbol{b}$
for $\boldsymbol{a},\boldsymbol{b}\in\mathcal{E}^{+}$, firstly consider
$\gamma\geq0$, in which case we have 
\begin{equation}
\begin{aligned}v\left(\gamma\boldsymbol{c}\right) & =v\left(\gamma\boldsymbol{a}-\gamma\boldsymbol{b}\right)\\
 & \stackrel{}{=}v\left(\gamma\boldsymbol{a}\right)-v\left(\gamma\boldsymbol{b}\right)\\
 & \stackrel{\text{}}{=}\gamma\left(v\left(\boldsymbol{a}\right)-v\left(\boldsymbol{b}\right)\right)\\
 & =\gamma v\left(\boldsymbol{c}\right).
\end{aligned}
\end{equation}
Secondly, consider $\gamma<0$,
\begin{equation}
\begin{aligned}v\left(\gamma\boldsymbol{c}\right) & =v\left(\left(-\gamma\right)\left(-\boldsymbol{c}\right)\right)\\
 & =v\left(\left(-\gamma\right)\left(\boldsymbol{b}-\boldsymbol{a}\right)\right)\\
 & \stackrel{}{=}\gamma\left(v\left(\boldsymbol{a}\right)-v\left(\boldsymbol{b}\right)\right)\\
 & =\gamma v\left(\boldsymbol{c}\right).
\end{aligned}
\end{equation}

Therefore, the extended map admits an expression as 
\begin{equation}
v\left(\boldsymbol{a}\right)=\boldsymbol{a}\cdot\boldsymbol{\omega},
\end{equation}
for some vector $\boldsymbol{\omega}=\sum_{j=1}^{d+1}v\left(\boldsymbol{x}_{j}\right)\boldsymbol{x}_{j}\in\R^{d+1}$,
where $\left\{ \boldsymbol{x}_{1},\ldots\boldsymbol{x}_{d+1}\right\} $
is a basis of $\R^{d+1}$. Finally, requirements (V1) and (V2) on
the behaviour of the frame function $v$ on the effect space $\mathcal{E}$
imply that $\boldsymbol{\omega}\in W\mathcal{\left(E\right)}$ which
concludes the proof.
\end{proof}
\WES*
\begin{proof}
Firstly, by the definitions of the maps $W$ and $E$ in Section \ref{subsec:no-restriction-hypothesis},
we have 
\begin{equation}
W\left(E\left(\mathcal{S}\right)\right)=\left(\mathcal{S}^{*}\cap\left(\boldsymbol{u}-\mathcal{S}^{*}\right)\right)^{*}\cap\boldsymbol{1},
\end{equation}
which, using Lemma \ref{dualconedualofcone}, implies 
\begin{equation}
W\left(E\left(\mathcal{S}\right)\right)=\left(\left(\mathcal{S}^{*}\cap\left(\boldsymbol{u}-\mathcal{S}^{*}\right)\right)^{+}\right)^{*}\cap\boldsymbol{1}.\label{eq:W(E(S))}
\end{equation}

Secondly, we find
\begin{equation}
\left(\mathcal{S}^{*}\cap\left(\boldsymbol{u}-\mathcal{S}^{*}\right)\right)^{+}=\mathcal{S}^{*},\label{eq:dual cone S}
\end{equation}
as shown previously at Eq. (\ref{eq:ESplusSstar}).

Finally, Eqs. (\ref{eq:W(E(S))}) and (\ref{eq:dual cone S}) followed
by an application of Lemma (\ref{Lemma: dual-cone}) and the fact
that $\cS^{+}$ is closed (Lemma \ref{lem:Splusclosed}) give 
\begin{equation}
\begin{aligned}W\left(E\left(\mathcal{S}\right)\right) & =\mathcal{S}^{**}\cap\boldsymbol{1}=\mathcal{S}^{+}\cap\boldsymbol{1}\\
 & =\left\{ \boldsymbol{\omega}\in\R^{d+1}|\boldsymbol{\omega}=x\boldsymbol{\omega}'\text{ for some }\boldsymbol{\omega}'\in\mathcal{S},x\geq0\text{ and }\boldsymbol{\omega}\cdot\boldsymbol{u}=1\right\} ,
\end{aligned}
\label{eq:wesx}
\end{equation}
where the last equality is by the definitions of $\cS^{+}$ and $\boldsymbol{1}$.
Since $\bo'\cdot\bu=1$ for $\bo'\in\cS$ we find the parameter $x$
in Eq. (\ref{eq:wesx}) satisfies $x=x\bo'\cdot\bu=\boldsymbol{\omega}\cdot\boldsymbol{u}=1$
so that
\begin{equation}
\mathcal{S}^{+}\cap\boldsymbol{1}=\mathcal{S}\,,
\end{equation}
as is required for Lemma \ref{Lemma WES} to hold.
\end{proof}
\EWE*
\begin{proof}
Firstly, by the definitions of the maps $W$ and $E$ given in Eqs.
(\ref{eq: E map}) and (\ref{eq: W map}), respectively, we have 
\begin{equation}
E\left(W\left(\mathcal{E}\right)\right)=\left(\mathcal{E}^{*}\cap\boldsymbol{1}\right)^{*}\cap\left(\boldsymbol{u}-\left(\mathcal{E}^{*}\cap\boldsymbol{1}\right)^{*}\right),\label{eq:ewe}
\end{equation}
as well as $\left(\mathcal{E}^{*}\cap\boldsymbol{1}\right)^{*}=\left(\left(\mathcal{E}^{*}\cap\boldsymbol{1}\right)^{+}\right)^{*}$,
by Lemma \ref{dualconedualofcone}.

Secondly, we will show that 
\begin{equation}
\left(\mathcal{E}^{*}\cap\boldsymbol{1}\right)^{+}=\mathcal{E}^{*}.\label{eq:E dual}
\end{equation}
If $\boldsymbol{\omega}\in\mathcal{E}^{*}$ then $\boldsymbol{\omega}\cdot\boldsymbol{u}\geq0$,
which gives 
\begin{equation}
\frac{1}{\boldsymbol{\omega}\cdot\boldsymbol{u}}\boldsymbol{\omega}\in\mathcal{E}^{*}\cap\boldsymbol{1}\,;
\end{equation}
therefore, we conclude that $\boldsymbol{\omega}\in\left(\mathcal{E}^{*}\cap\boldsymbol{1}\right)^{+}$.
Conversely, if $\boldsymbol{\omega}\in\left(\mathcal{E}^{*}\cap\boldsymbol{1}\right)^{+}$,
then $x\boldsymbol{\omega}\in\mathcal{E}^{*}$ for some $x\geq0$,
hence $\boldsymbol{\omega}\in\mathcal{E}^{*}$.

Finally, combining Eqs. (\ref{eq:ewe}) and (\ref{eq:E dual}), we
have 
\begin{equation}
E\left(W\left(\mathcal{E}\right)\right)=\mathcal{E}^{**}\cap\left(\boldsymbol{u}-\mathcal{E}^{**}\right)=\overline{\mathcal{E}^{+}}\cap\left(\boldsymbol{u}-\overline{\mathcal{E}^{+}}\right)\,,
\end{equation}
completing the proof. 
\end{proof}

\section{Proof of Theorem \ref{cor:simGPT}\label{sec:Proof-of-Corollary}}
\begin{proof}
Due to the convexity of the effect space $\mathcal{E}$, we can express
any effect $\E\in\mathcal{E}$ as a convex combination $\E=\sum_{j}p_{j}\E_{j}$,
for some extremal effects $\E_{j}$ and real numbers $p_{j}\in[0,1]$
which sum to one. Thus we may simulate the observable 
\begin{equation}
\mathbb{D}_{\E}=\left\llbracket \E,\boldsymbol{u}-\E\right\rrbracket \label{eq:twooutcome}
\end{equation}
by measuring the observables $\mathbb{D}_{\E_{j}}=\left\llbracket \E_{j},\boldsymbol{u}-\E_{j}\right\rrbracket ,j\in J,$
with probability $p_{j}$. Furthermore, for any effects $\E,\E^{\prime}\in\mathcal{E}$,
we may simulate the observable 
\begin{equation}
\mathbb{T}_{\E,\E^{\prime}}=\left\llbracket \frac{1}{2}\E,\frac{1}{2}\E^{\prime},\boldsymbol{u}-\frac{1}{2}\left(\E+\E^{\prime}\right)\right\rrbracket \label{eq:halves}
\end{equation}
by performing either $\mathbb{T}_{2\E,\mathbf{0}}=\left\llbracket \E,\boldsymbol{0},\boldsymbol{u}-\E\right\rrbracket $
or $\mathbb{T}_{\mathbf{0},2\E^{\prime}}=\left\llbracket \boldsymbol{0},\E^{\prime},\boldsymbol{u}-\E^{\prime}\right\rrbracket $,
with equal probability.

Firstly, applying Definition \ref{simulable frame function} to Eqs.
(\ref{eq:twooutcome}) and (\ref{eq:halves}) with $\E=\E^{\prime}$
gives 
\begin{equation}
v\left(\E\right)+v\left(\boldsymbol{u}-\E\right)=1=v\left(\frac{1}{2}\E\right)+v\left(\frac{1}{2}\E\right)+v\left(\boldsymbol{u}-\E\right),
\end{equation}
and hence 
\begin{equation}
v\left(\E/2\right)=v\left(\E\right)/2.\label{eq:half}
\end{equation}

Secondly, for any effects $\E,\E^{\prime}\in\mathcal{E}$ such that
$\E+\E^{\prime}\in\mathcal{E}$, the observable 
\begin{equation}
\mathbb{D}_{\frac{1}{2}(\E+\E^{\prime})}=\left\llbracket \frac{1}{2}\left(\E+\E^{\prime}\right),\boldsymbol{u}-\frac{1}{2}\left(\E+\E^{\prime}\right)\right\rrbracket ,
\end{equation}
is simulable by Eq. (\ref{eq:twooutcome}). Comparing with Eq. (\ref{eq:halves})
gives 
\begin{equation}
v\left(\frac{1}{2}\E\right)+v\left(\frac{1}{2}\E^{\prime}\right)=v\left(\frac{1}{2}\left(\E+\E^{\prime}\right)\right),
\end{equation}
so that $v\left(\E\right)+v\left(\E^{\prime}\right)=v\left(\E+\E^{\prime}\right)$
follows, using Eq. (\ref{eq:half}). By induction, any simulable frame
function $v$ is a frame function as defined in Definition \ref{A-generalized-probability}.
Thus, by Theorem \ref{thm: NU main}, any simulable frame function
$v$ admits the expression given in Eq. (\ref{eq:gleason sim}). 
\end{proof}

\end{document}